\documentclass[sn-mathphys-num]{sn-jnl}
\usepackage{amsmath,amssymb,amsthm}
\usepackage{braket}
\usepackage{graphicx}
\usepackage{url}
\usepackage{bbm}
\newcommand{\bp}{{\mathbf p}}
\newcommand{\bx}{{\mathbf x}}
\newcommand{\n}{{\mathbf w}}
\newcommand{\be}{\begin{equation}}
\newcommand{\ee}{\end{equation}}
\newcommand{\BR}{\mathbb{R}}
\newcommand{\BC}{\mathbb{C}}
\newcommand{\bpi}{\boldsymbol{\pi}}
\newcommand{\bv}{{\mathbf v}}
\DeclareMathOperator\Sec{Sec}

\newcommand{\cL}{{\mathcal{L}^\uparrow}}
\newcommand{\tD}{\tilde{D_{(m)}}}
\newcommand{\om}{|\bp|}
\newcommand{\f}{{\mathbf f}}
\newcommand{\tH}{{\tilde{\mathcal{H}}}}
\newcommand{\tU}{\tilde{U}}
\renewcommand{\H}{{\mathcal{H}}}
\def\bM{{\bf M}}

\newcommand{\mP}{\mathcal{P}}
\newcommand{\m}{{\mathbf a}}
\newcommand{\Br}{{\mathbf r}}
\newtheorem{theorem}{Theorem}
\newtheorem{Remark}{Remark}
\newtheorem{Proposition}[theorem]{Proposition}
\begin{document}
\title{The explicit form of the unitary representation of the Poincar\'e group for vector-valued wave functions (massive and massless), with applications to photon's localization and position operators}
\author{Arkadiusz Jadczyk\email{azjadczyk@gmail.com}}
\affil{Ronin Institute, Montclair, NJ 07043}

\abstract{

We geometrically derive the explicit form of the unitary representation of the Poincar\'e group for vector-valued wave functions and use it to apply speed-of-light boosts to simple polarization basis to end up with Hawton-Baylis photon position operator with commuting components. We give explicit formulas for other photon boost eigenmodes. We investigate the underlying affine connections on the light cone in momentum space and find that while Pryce connection is metric semi-symmetric, the flat Hawton-Baylis connection is not semi-symmetric. Finally we discuss localizability of photon states on closed loops and show that photon states on the circle, both unnormalized improper states and finite norm wave packet smeared over washer-like regions are strictly localized not only with respect to Hawton-Baylis operators with commuting components but also with respect to the noncommutative Jauch-Piron-Amrein POV measure.}

\keywords{photon wave function; unitary representation; Poincar\'e group; boost eigenmodes; position operator; POV measure; Riemannian metric; affine connection; semi-symmetric metric connection; loop localized photon states}
\maketitle


\section{Introduction}
In classical relativistic field theory the electric and magnetic field strengths are combined into a closed two-form $F$ over the Minkowski spacetime. Since $F$ is closed, $dF=0,$ it admits a four-vector potential $A,$ (a 1-form) so that $F=dA.$ While in classical theory $A$ is convenient, but not strictly necessary, when we move to quantum theory $A$ is almost unavoidable. There, when describing the interaction of the electromagnetic field with charged matter (particles or fields), $A$ has a simple geometrical interpretation as a connection one-form in a complex line bundle, and $F$ becomes a curvature of this connection (physicists sometimes like to use the term ``a non-integrable phase factor" for the fact that we are dealing with a connection form of a no-zero curvature). This simple and beautiful geometrical picture gets, however, somewhat lost when we move to the quantum theory of electromagnetic field itself (both first and second quantized).\footnote{A promising new way of keeping the geometry alive also in quantum field theory (within the algebraic framework) has been suggested by D. Buchholz et al. - c.f.  \cite{buch} and references therein.} Photons - the quanta of the electromagnetic field, are treated there as relativistic elementary particles, and as such are described by irreducible unitary representations of the Poincar\'e group. The whole machinery of Lie algebras, Casimir operators, Lie groups, induced representations and ``little groups" is being brought forward, and geometry gets almost completely forgotten. Physicists construct one-particle Hilbert spaces and multi-particle Fock spaces (while the first quantization is a `miracle', the second quantization is a functor), move into the algebra of operators, and this is a whole new world, with little place for differential geometry.
V.S. Varadarajan in his classic monograph ``Geometry of Quantum Theory'', chapter ``Representations in vector bundles and wave equations'' comes very close to fulfilling this task, unfortunately, when it comes to photons, the chapter ends with the sentence ``We do not get into these ideas here." - \cite[p. 371]{var}.

The group-theoretical analysis of elementary relativistic quantum systems lead to the concept of imprimitivity systems, developed by G.W. Mackey (cf. e.g. \cite[Ch. VI]{var} and references therein), and to the associated concept of the localization of elementary quantum particles. A.S. Wightman \cite{wightman} applied these concepts to the study of localizability of quantum mechanical systems and came to conclusion confirming the previous analysis of T.D. Newton and E.P Wigner \cite{nw}, namely that photons (as well as other particles of rest mass zero and helicity $\geq 1$) are covariantly non-localizable in a strict sense of an imprimitivity system based on the $3$-d Euclidean group acting on $\BR^3.$.

J.M. Jauch and C. Piron \cite{jp}, developed a concept of ``weak localizability'' replacing projection-valued measure by POV (positive operator-valued) measures, and A.O. Amrein \cite{amrein} proved that there exist
photon states strictly POV-localized in arbitrarily small regions of space, while, more recently, I. and Z. Bialynicki-Birula \cite{bb3} argued that photons cannot be sharply localized because of a kind of complementarity between magnetic and electric energy localization.

Closely related to the problem of a photon's localization is the problem of existence of the photon position operator ${\bf Q}.$ The problem is not exactly the same since a given vector-valued operator may have infinitely many representations in terms of POV measures (except when $Q_i$'s commute, and then there is a distinguished projection-valued spectral measure). It is known \cite{mourad} that the standard requirements of the covariance with respect to the Euclidean group and inversions lead to a unique ${\bf Q},$ - known as the Pryce photon position operator\footnote{The proof of uniqueness provided in this reference has a hole, as it requires an additional restriction on the form of the operator. But the hole in the proof can be completed, and no extra assumptions are in fact necessary.}, the trouble is that the components $Q_i$ do not commute, which makes the simple probabilistic interpretation for the photon's localization problem impossible.\footnote{While $Q_i$ admits a natural decomposition with respect to a POV measure, K. Kraus \cite{kraus} has shown that there exists more than one such measure, so that the question appears which one of them is more natural than others, and why?}  B.S. Skagerstam \cite{skagerstam} interpreted noncommutativity of components of the Pryce operators $Q_i^{\mbox{PR}}$ in terms of the curvature of a connection in a photon's momentum space\footnote{Skagerstam is using Jackiw's concept of ``three-cocycle'' \cite{jackiw}, developed before under a different name (generalized imprimitivity system) by the present author \cite{jadczyk}.}, and applied it to derivation of the Berry phase for a photon \cite{ciao}. The same idea has been discussed before by I. and Z. Bialynicki-Birula \cite{bb2}, except that in their paper the same connection has been derived independently of the photon's position operator question.

Abandoning the requirement of a covariance with respect to the $3D$ rotation group opens the way towards the construction of a huge family of ``position operators" for photons, including those with commuting components and therefore admitting a unique spectral decomposition. M. Hawton \cite{hawton} has started a whole series of works in this direction. Commuting of the components of the position operator implies a flat (curvature zero) connection on the positive light cone in momentum space. Connections of this type (no curvature, but torsion) has been long ago investigated by A. Staruszkiewicz \cite{staru1,staru2}, who required that the parallel transport preserves the natural (degenerate) metric, the volume form, and that the connection is semi-symmetric (a constraint on the torsion). Hawton and Baylis \cite{bh} investigated a particular photon position operator ${\bf Q}^{\mbox{HB}}$ with commuting components, the Hawton-Baylis operator , with axial symmetry. \footnote{If Berry phase is related to a non-zero curvature of the connection responsible for the parallel transport, as it is usually assumed, then any such position operator must lead to a vanishing Berry phase.} Recently these ideas have been further developed by Dobrski et al. \cite{dob1,dob2}.

The present work started with the realization that there is an apparent discrepancy between the geometrical picture of the photon wave function as a section of the tangent or cotangent bundle\footnote{Whenever a (non-degenerate) metric is available, there is no need to distinguishes between tangent and cotangent bundles. The distinction between the two becomes relevant only in premetric formulations of electrodynamics - c.f. \cite{hehl} and references therein.}, and the way the Lorentz boosts act on vector-valued functions within the unitary representation of the Poincar\'e group considered in all papers dealing with the photon wave function and with photon position operators.\footnote{For the photon wave function see e.g. \cite{bb0}, while for photon position operators discussed in a similar context as that taken in the present paper see \cite{dob2} and references therein.} In the present paper we start with a geometrical description of massive vector fields transforming naturally under the Poincar\'e group. We work in the momentum space\footnote{A good introduction to photon's wave mechanics, in both momentum and position space, as well as to the photon's localization problem, can be found in the review article \cite{keller1}, and in the monograph \cite{keller2}, by Ole Keller.} and discuss the natural unitary representation acting on the Hilbert space of sections of the tangent bundle of the positive mass hyperboloid, square integrable with respect to the natural Lorentz invariant measure. The natural Riemannian metric appearing there becomes degenerate in the limit $m=0.$ For $m>0,$ taking the positive square root of this metric we split the tangent space at each point into two mutually orthogonal parts:  the longitudinal-transversal split with respect to the mass-independent standard Euclidean metric. This enables us to twist the unitary representation along the longitudinal part. We obtain an explicit form of so obtained irreducible unitary representation and then take the limit $m=0$, which now is finite (though not irreducible). Only then the Lie algebra of the Poincar\'e group acquires the standard form.

We then use the explicit form of the boost unitary operators to obtain rather unexpected result: the polarization basis used for constructing the teleparallel connection by Hawton and Baylis \cite{bh} can be obtained by taking the speed-of-light limit in the $z$-direction of simple Hertz-type potentials ${\bf e}_1\sim \bp\times\n$ and ${\bf e}_2\sim \bp\times(\bp\times \n), \n=(0,0,1).$ We then discuss the Hawton-Baylis connection and the associated photon position operator with commuting components, and compare it to the Pryce connection with non-zero curvature and torsion. In particular we find that the Pryce connection is metric semi-symmetric, while Hawton-Baylis connection does not have the semi-symmetry property.

In the last part of this paper we analyze photon states $\f_\ell$ localized on loops $\ell$ in photon's position space. They are given as simple superpositions of plane waves localized at the points of the loop. For the particular case when the loop is a circle on a $z=0$ plane we show that $\f_l$ are localized on the circle not only with respect to the Jauch-Piron-Amrein POV measure $F(\Delta)$  (what was known before), but also (which came as a surprise), with respect to ${\bf Q}^{\mbox{HB}}.$

We construct simple wave packets made of these circle states that provide normalized photon states localized in arbitrarily small washer-like regions of space, again both with respect to  ${\bf Q}^{\mbox{HB}}$ and $F(\Delta).$

\noindent {\bf Notation}\\
We work in the momentum representation, spacetime signature $(-+++).$. Coordinates $(p^0,p^1,p^2,p^3)=(p^0,\bp),$ $p^2=\bp^2-(p^0)^2.$ We will write $\BR^3$ to denote the $3$-dimensional real vector space of the momentum vectors $\bp.$ Greek indices $\mu,\nu,\rho,\sigma$ run from $0$ to $3$. Latin indices $i,j,k,l$ run from $1$ to $3$. We use the Greek letter $\alpha,$ also running from $1$ to $3$ for numbering the basis vectors in momentum space. Summation over repeated indices is implied.
Only positive energies, $p^0>0,$ are being used. We will use the notation $\bpi$ to denote the dimensionless unit vector in the direction of the momentum $\bp$: $\bpi=\bp/|\bp|.$
\section{Massive vector field}
For $m>0,$ we denote by $V^+_m$ the hyperboloid $p^2+m^2=0,\,p^0>0,$ i.e.
\be p^0=+\sqrt{\bp^2+m^2}.\label{eq:p4}\ee

Later on we will be interested in the limit $m\rightarrow 0.$ The mass hyperboloid $V^+_m$ is globally parametrized by $\bp\in \BR^3.$ In the limit $m\rightarrow 0,$ $V^+_m$ becomes the positive cone $V^+_0$, and we will remove the origin $\bp=0.$
\subsection{The tangent bundle}
Let $TV^+_m$ be the tangent bundle of $V^+_m.$
Let $p^\mu(t)$ be a (differentiable) path in $V_m^+,$ with $(p^\mu(0))=(p^0=\sqrt{\bp^2+m^2},\bp).$ For each $t$ we have
\be p^0(t)^2=\bp(t)^2+m^2.\ee
Differentiating at $t=0$ and setting $v^\mu=(dp^\mu(t)/dt)|_{t=0},$ we get
\be v^0 p^0=\bv\cdot\bp,\ee
therefore, in $TV^+_m,$ at $\bp,$  we can use only the coordinates $v^i$, the coordinate $v^0$ being given by
\be v^0(\bp)=\frac{\bv\cdot\bp}{p^0}.\label{eq:v4}\ee
\subsection{Action of the Lorentz group}
Let $\eta=(\eta_{\mu\nu})$ be the matrix $\eta=\mbox{diag}(-1,+1,+1,+1).$
The inverse matrix $\eta^{-1}=(\eta^{\mu\nu})$ has the same matrix elements. Indices are being raised and lowered with the matrices $\eta^{-1}$ and $\eta$ respectively. In particular $p_i=p^i,$ and $p_0=-p^0.$\footnote{We will never use $p_0$.}

Let $\cL$ be the orthochronous Lorentz group, that is the group of those $4\times 4$ real matrices $L=({L^\mu}_\nu) $ satisfying
$ \eta^{-1}L^T\eta=L^{-1},\, {L^0}_0>0.$

The group $\cL$ acts on $V^+_m$ via $p\mapsto Lp,$ $(Lp)^\mu={L^\mu}_\nu p^\nu.$ On the mass hyperboloid $V^+_m,$ $p^0$ is determined by $\bp.$ Therefore, on $V^+_m,$  we can write
$ (Lp)^i={L^i}_j p^j+{L^i}_0\,p^0,$
where $p^0=\sqrt{\bp^2+m^2}.$ Let $\bp\mapsto L\bp$ denote this action:
\be (L\bp)^i={L^i}_j\,p^j+(L^i)_0\,p^0.\ee
 It induces the action on the tangent bundle $TV_m^+$ as follows.

If $v^\mu$ is a vector tangent to $V^+_m$ at $\bp, $ then $(Lv)^\mu={L^\mu}_\nu v^\nu$ is tangent to $V^+_m$ at $Lp.$ Using now Eq. (\ref{eq:v4}) we obtain
\be (Lv)^i=({L^i}_j+\frac{{L^i}_0p_j}{p^0})v^j.\label{eq:Lv}\ee
We set
\be {\tD^i}_j(L,\bp)={L^i}_j+\frac{{L^i}_0p_j}{p^0}.\label{eq:Lf2}\ee
The wave functions of the massive spin $1$ particle are sections of the tangent bundle $TV^+_m.$ We denote by $\Sec(TV^+_m)$ the space of these sections. If $\bp\mapsto \f(\bp)$ is in $\Sec(TV^+_m)$, then $L\f$ is defined through the formula
\be (L\f)(L\bp)=L(\f(\bp)),\ee or, in coordinates:
\be (L\f)^i(L\bp)={\tD^i}_j(L,\bp)f^j(\bp).\ee
Replacing $L\bp$ with $\bp$ and $\bp$ with $L^{-1}\bp$, we get
\be (L\f)^i(\bp)={\tD^i}_j(L,L^{-1}\bp)f^j(L^{-1}\bp)\label{eq:Lf1}.\ee
The formula (\ref{eq:Lf1}) defines a natural linear action of $\cL$ on $\Sec(TV^+_m).$

It can be verified by a direct calculation that the matrices $\tD(L,\bp)$ satisfy the following ``cocycle relations'':
\be \tD(L_1L_2,\bp)=\tD(L_1,L_2\bp)\tD(L_2,\bp),\ee
which is equivalent to
\be (L_1L_2)\f=L_1(L_2\f),\ee
which means, that we have a linear representation of the group  $\cL$ on  $\Sec(TV^+_m).$ So far no restriction on sections $\bp\mapsto f(\bp)$ are needed. The minimal assumption is that they are (Borel) measurable. Notice also that $f(\bp)$ can be assumed to be real. As long as we are interested only in pure Lorentz transformations, and not in spacetime translations, there is no need to complexify $TV^+_m$.
\subsection{Riemannian metric on $V^+_m.$}
For $m>0$ the fibers of $TV^+_m$ carry a natural Lorentz invariant  Riemannian structure: the scalar product $ v\cdot v'=\bv\cdot\bv'-v^0{v'}^0$ is evidently Lorentz-invariant. Substituting the expression (\ref{eq:v4}) we obtain the coordinate expression for the Riemannian metric $g_{(m)ij}(\bp)$
\be g_{(m)ij}(\bp)=\delta_{ij}-\frac{p_ip_j}{\om^2+m^2}.\label{eq:g}\ee
Its inverse is given by
\be g_{(m)}^{ij}=\delta^{ij}+\frac{p^ip^j}{m^2}.\ee
Since the flat metric $\eta$ is invariant under the linear action of the Lorentz group, the induced metric $g_{m}$ is invariant under the induced action. We have
\be \tD(L,\bp)^Tg_{(m)}(L\bp)\tD(L,\bp)=g_{(m)}(\bp).\label{eq:dgd}\ee
In what follows we will need the positive square root of $g_{(m)}.$ Since, as long as $m>0,$
 $g_{(m)}$ is positive definite, there exists a unique {\em positive-definite square root\,} $h_{(m)}=({h^{(m)i}}_j)$ of $g_{(m)}.$ It can be verified that $h_{(m)}$ is given by the following explicit expression:
\be {{h_{(m)}}^i}_j(\bp)=\delta^i_j+\lambda(\bp)\pi^i\pi_j,\label{eq:hm}\ee
\be \lambda(\bp)=\frac{m}{p^0}-1.\label{eq:l1}\ee
We will also need its inverse $(h_{(m)})^{-1},$ which is given by
\be {{{h_{(m)}}^{-1}}^i}_j(\bp)=\delta^i_j+\mu(\bp)\pi^i\pi_j,\label{eq:h1}\ee
\be \mu(\bp)=\frac{p^0}{m}-1.\label{eq:h2}\ee
\begin{Remark}
We have two scalar products in $\BR^3_\bp$, the standard, Euclidean one, $\delta_{ij}$, and the one determined by the metric $g_{(m)ij}.$ In the following whenever we write the dot product or raise or lower the space index $(i,j,\ldots),$  we will always use the standard Euclidean metric.
\end{Remark}
\subsection{The Hilbert space and unitary representation}
We have
\be \sqrt{\det g_{(m)}}=\frac{m}{p^0},\ee
therefore, since $g_{(m)}$ is Lorentz invariant, $d^3p/p^0$ is a Lorentz invariant measure on $V^+_m.$ We define the Hilbert space $\tH_m$ as the Hilbert space of sections $\f(\bp)$ of $TV^+_m$ square integrable with respect to the scalar product
\be (\f,\f')_m=\int g_{(m)ij}(\bp)\bar{f}^i(\bp)f'^j(\bp)\frac{d^3p}{p^0}.\ee
By construction the formula
\be (\tU(L)\f)(\bp)\doteq (L\f)(\bp)=\tD(L,L^{-1}\bp)\f(L^{-1}\bp)\ee
defines a unitary representation of $\cL$ on $\tH_m.$

The unitary representations of $\cL$ for different values of $m$ have the same form but act in Hilbert spaces with different scalar products in the fibers. In order to be able to take the limit $m=0$ it is convenient to use just one standard fiber scalar product, independent of the value of $m$, but make the form of the representation $m$-dependent. To this end  let $\H_m$ be the Hilbert space of sections of $TV^+_m$ square integrable with respect to the standard Hermitian scalar product
\be <\f,\f'>_m=\int \bar{\f}(\bp)\cdot\f'(\bp)\frac{d^3p}{p^0},\ee
where $\bar{\f}$ denotes the complex conjugate (not needed if $\f$ is real). Then the map \be h_{(m)}:\f(\bp)\mapsto {h_{(m)}}(\bp)\f(\bp),\ee where $h_{(m)}$ is given by Eq. (\ref{eq:hm}), is an isometry from $\tH_m$ to $\H_m.$ Correspondingly we have a unitary representation $L\mapsto U_m(L)=h_{(m)}\circ \tU(L)\circ {h_{(m)}}^{-1}$ of $\cL$ on $\H_m.$ It follows then from the definition that $U(L)$ can be written as
\be (U_m(L)f)(\bp)=D_m(L, L^{-1}\bp)f(L^{-1}\bp),\ee
where
\be D_m(L,\bp)=h_{(m)}(L\bp)\tD(L,\bp)h_{(m)}^{-1}(\bp).\label{eq:dml}\ee
By using Eq. (\ref{eq:dgd}) and the definition of $h_{(m)}$ we find that the matrices $D_m(L,\bp)$ are orthogonal, which makes the property of unitarity of $U_m(L)$ evident.
\footnote{Using the terminology of \cite[p. 175]{var} one says that the cocycles $\tD$ and $D_m$ are {\it strictly cohomologous}.}

\subsubsection{The longitudinal-transversal split}
Assuming $\bp\neq 0,$ the eigenvalue equation $g_{(m)}\f=\lambda \f$ for the real symmetric matrix $g_{(m)}$ reads
\[ f^i(\bp)-\frac{\bp\cdot\f(\bp)}{\om^2+m^2}p^i=\lambda f^i(\bp),\]
i.e.
\be (1-\lambda)f^i(\bp)=\frac{\bp\cdot\f(\bp)}{\om^2+m^2}p^i.\ee
Thus either $\lambda=1,$ and then $\bp\cdot\f(\bp)=0,$ or $\lambda\neq 1,$ and then $\f(\bp)$ is proportional to $\bp.$ For each $\bp\neq 0$ let $\mP_0(\bp)$ be the orthogonal projection (in $\BR^3$ endowed with the Euclidean metric $\delta_{ij}$) on the one-dimensional subspace consisting of vectors $\bf$ proportional to $\bp$ :
\be \mP_0(\bp)\,\f(\bp)=(\bpi\cdot\f(\bp))\bpi.\ee
Then $\mP_0(\bp)$ projects onto the eigenspace of $g_{(m)}(\bp)$ belonging to the eigenvalue  $m^2/(\om^2+m^2).$
Let $\mP_1(\bp)=I-\mP_0(\bp)$ be the orthogonal projection on the complementary subspace of vectors corresponding to the eigenvalue $1$:
\be \mP_1(\bp)\f(\bp)=\f(\bp) - (\bpi\cdot\f(\bp))\bpi,\ee
so that
\be g_{(m)}(\bp)=\frac{m^2}{\om^2+m^2}\mP_0(\bp)+\mP_1(\bp).\ee
Then we immediately get
\be h_{(m)}(\bp)=\frac{m}{\sqrt{\om^2+m^2}}\mP_0(\bp)+\mP_1(\bp),\label{eq:hmp}\ee
and
\be {h_{(m)}}^{-1}(\bp)=\frac{\sqrt{\om^2+m^2}}{m}\mP_0(\bp)+\mP_1(\bp).\label{eq:hmp1}\ee
It should be noticed, however, that, as long as $m>0,$ the split above is not invariant under the action of boosts of the Lorentz group on $V_m^+.$
\subsection{The limit $m=0$}
For $m=0$ the matrices $\tD(\bp)$ are still given by Eq. (\ref{eq:Lf2}), but now $p^0=\om.$ It is less obvious that the matrices $D_m$ given by Eq. (\ref{eq:dml}) remain finite in the zero mass limit. In Proposition \ref{prop1} below we show that, for all $L$ in $\cL$ and $\bp$ in $V_m^+,$ the limit \be D_0(L,\bp)=\lim\limits_{m=0} D_m(L,\bp)\ee exists, and we provide its explicit form.
\begin{Proposition}\label{prop1}
$D_0(L,\bp)$ is finite and it is given by the following explicit formula:
\be {{D_0}^i}_j(L,\bp)={L^i}_j+
{L^i}_0\pi_j-
{\pi'}^i{L^0}_j+
{\pi'}^i\pi_j(1-{L^0}_0),\label{eq:d0}\ee
where $\bpi'=\frac{L\bp}{|L\bp|}.$
For each $L\in\cL$ and $\bp\neq 0$ the matrix $D_0(L,\bp)$ is orthogonal
\be D_0(L,\bp)^T=D_0(L,\bp)^{-1}.\label{eq:dort}\ee
The representation
\be (U_0(L)\f)(\bp)\doteq D_0(L,L^{-1}\bp)\f(L^{-1}\bp)\label{eq:u0}\ee
of $\cL$ on the Hilbert space $\H_0=L^2(\BR^3_\bp,d^3p/\om)\otimes\BC^3$
is unitary.
\end{Proposition}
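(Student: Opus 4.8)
The plan is to establish the three assertions in turn: finiteness together with the explicit formula (\ref{eq:d0}), then orthogonality (\ref{eq:dort}), then unitarity of (\ref{eq:u0}). Throughout I would insert the spectral form (\ref{eq:hmp})--(\ref{eq:hmp1}) of $h_{(m)}$ and $h_{(m)}^{-1}$ into the definition (\ref{eq:dml}), $D_m=h_{(m)}(L\bp)\,\tD(L,\bp)\,h_{(m)}^{-1}(\bp)$, and expand the product against the two resolutions $I=\mP_0+\mP_1$ of the identity at $\bp$ and at $L\bp$. This writes $D_m$ as a sum of four terms carrying the scalar prefactors $a_m=m/\sqrt{|L\bp|^2+m^2}\to0$ and $b_m=\sqrt{\om^2+m^2}/m\to\infty$, with $a_mb_m\to\om/|L\bp|$. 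Three of these terms are straightforward; the delicate one is $b_m\,\mP_1(L\bp)\,\tD(L,\bp)\,\mP_0(\bp)$, whose prefactor diverges.

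The main obstacle is to show that this term nonetheless stays finite, and in fact vanishes. The key computation is that $\tD(L,\bp)\bpi$ differs from $(L\bp)/\om$ only at order $m^2$: a short calculation using $\bp\cdot\bpi=\om$ gives $(\tD(L,\bp)\bpi)^i=(L\bp)^i/\om-\bigl(m^2/(\om p^0)\bigr){L^i}_0$. Since $L\bp$ is annihilated by $\mP_1(L\bp)$, applying $\mP_1(L\bp)$ leaves only the $O(m^2)$ remainder, which is outpaced by $b_m=O(1/m)$; hence $b_m\,\mP_1(L\bp)\,\tD(L,\bp)\,\mP_0(\bp)\to0$. The same expansion shows $a_m\,\mP_0(L\bp)\,\tD(L,\bp)\,\mP_1(\bp)\to0$ and $a_mb_m\,\mP_0(L\bp)\,\tD(L,\bp)\,\mP_0(\bp)\to\bpi'\otimes\bpi$, i.e. the matrix ${\pi'}^i\pi_j$. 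Tracking these rates so that the $0\cdot\infty$ competition resolves correctly is the crux of the proof.

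There remains the surviving transversal block. Collecting limits gives $D_0={\pi'}^i\pi_j+\bigl[\mP_1(L\bp)\,\hat D\,\mP_1(\bp)\bigr]{}^i_j$, where $\hat D$ denotes the $m\to0$ value of $\tD(L,\bp)$, with entries $\hat D{}^i_j={L^i}_j+{L^i}_0\pi_j$. Using $\hat D\bpi=(|L\bp|/\om)\,\bpi'$ and ${\bpi'}^T\bpi'=1$, the block collapses to $D_0{}^i_j={\pi'}^i\pi_j+\hat D{}^i_j-{\pi'}^i(\pi'_k\,\hat D{}^k_j)$. To identify this with (\ref{eq:d0}) I would invoke the Lorentz relations $\sum_k{L^k}_\alpha{L^k}_\beta=\eta_{\alpha\beta}+{L^0}_\alpha{L^0}_\beta$ coming from $L^T\eta L=\eta$, together with $|L\bp|/\om={L^0}_0+\pi^l{L^0}_l=:s$, which is strictly positive for orthochronous $L$ by the Cauchy--Schwarz inequality and $({L^0}_0)^2-\sum_l({L^0}_l)^2=1$. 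These yield $\pi'_k{L^k}_j=\pi_j/s+{L^0}_j$ and $\pi'_k{L^k}_0={L^0}_0-1/s$, whence $\pi'_k\,\hat D{}^k_j={L^0}_j+{L^0}_0\pi_j$ and the whole expression collapses term by term to (\ref{eq:d0}).

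For orthogonality and unitarity I would avoid a second brute-force verification. Since $D_m^TD_m=I$ holds for every $m>0$ (as already recorded after (\ref{eq:dml})) and $D_m\to D_0$ by the above, the closed condition passes to the limit and gives $D_0^TD_0=I$; the cocycle identity $\tD(L_1L_2,\bp)=\tD(L_1,L_2\bp)\tD(L_2,\bp)$ descends to $D_0$ in the same way. Finally, because $D_0(L,\bp)$ is a \emph{real} orthogonal matrix, one has $\overline{D_0\f}\cdot D_0\f'=\bar\f\cdot\f'$ pointwise; combined with the Lorentz invariance of $d^3p/\om$ (so that $\bp\mapsto L\bp$ preserves the measure) this shows each $U_0(L)$ is an isometry on $\H_0$, and the limiting cocycle relation supplies the two-sided inverse $U_0(L^{-1})$, so $U_0(L)$ is unitary.
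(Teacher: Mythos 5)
Your proposal is correct and takes essentially the same route as the paper's own proof: the identical four-term decomposition of $D_m$ against the projections $\mP_0,\mP_1$ at $\bp$ and $L\bp$, the same key $O(m^2)$ cancellation showing that the divergent-prefactor term $b_m\,\mP_1(L\bp)\tD(L,\bp)\mP_0(\bp)$ vanishes in the limit, the same use of $L^T\eta L=\eta$ to collapse the surviving blocks into Eq. (\ref{eq:d0}), and the same limit arguments (orthogonality as a limit of orthogonal matrices, unitarity from Lorentz invariance of $d^3p/\om$). The only differences are presentational: you simplify the longitudinal block to ${\pi'}^i\pi_j$ immediately and spell out the algebra for $\pi'_k{L^k}_j$ and the isometry-plus-inverse step in slightly more detail than the paper does.
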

\begin{proof}
Substituting Eqs. (\ref{eq:hmp}) and (\ref{eq:hmp1}) into Eq. (\ref{eq:dml}) we obtain for $D_m(L,\bp)$ the sum of four terms $t_1,t_2,t_3,t_4,$ where
\begin{eqnarray} t_1(m)&=&\frac{\sqrt{|\bp|^2+m^2}}{\sqrt{|L\bp|^2+m^2}}\mP_0(L\bp)\tD(L,\bp)\mP_0(\bp),\\
t_2(m)&=&\mP_1(L\bp)\tD(L,\bp)\mP_1(\bp),\\
t_3(m)&=&\frac{m}{\sqrt{|L\bp|^2+m^2}}\mP_0(L\bp)\tD(L,\bp)\mP_1(\bp),\\
t_4(m)&=&\frac{\sqrt{|\bp|^2+m^2}}{m}\mP_1(L\bp)\tD(L,\bp)\mP_0(\bp).
\end{eqnarray}
Noticing that the orthogonal projections $\mP_0$ and $\mP_1$ do not depend of $m$ we see that $t_1(0)$ and $t_2(0)$ are finite, and are given by
\be t_1(0)=\frac{|\bp|}{|L\bp|}\mP_0(L\bp)\tilde{D_{(0)}}(L,\bp)\mP_0(\bp),\ee
\be t_2(0)=\mP_1(L\bp)\tilde{D_{(0)}}(L,\bp)\mP_1(\bp),\ee
while the third term vanishes
\be t_3(0)=0.\ee
We will now show that, surprisingly,  the fourth term $t_4(m)$ also vanishes in the limit $m=0.$ To this end we consider first
the product
\be D\mP_0(m)\doteq\tD(L,\bp)\mP_0(\bp)\ee of its last two factors. Using the definitions we have
\be
{{D\mP_0(m)}^i}_k=\left({L^i}_j+\frac{{L^i}_0p_j}{p^0}\right)\frac{p^jp_k}{|\bp|^2}
=\left(\frac{{L^i}_jp^j}{|\bp|^2}+\frac{{L^i}_0}{p^0}\right)p_k.
\ee
Setting $p'=Lp,$ we have ${L^i}_jp^j=p'^i-{L^i}_0p^0,$ therefore
\be {{D\mP_0(m)}^i}_k=\frac{p'^ip_k}{|\bp|^2}+{L^i}_0\left(\frac{1}{p^0}-\frac{p^0}{|\bp|^2}\right)p_k=
\frac{p'^ip_k}{|\bp|^2}-m^2\frac{{L^i}_0p_k}{p^0|\bp|^2}.\label{eq:dp0m}\ee
Now, the projection $\mP_1(\bp')$ vanishes on the first term $\frac{p'^ip_k}{|\bp|^2},$ as the range of this matrix is in the longitudinal subspace at $\bp'$, and $\frac{\sqrt{|\bp|^2+m^2}}{m}\mP_1(L\bp)$ acting on the second term vanishes linearly in $m.$ Therefore also $t_4(0)=0,$ and so
\be D_0(L,\bp)=\frac{|\bp|}{|\bp'|}\mP_0(\bp')\tilde{D_{(0)}}(L,\bp)\mP_0(\bp)+\mP_1(\bp')\tilde{D_{(0)}}(L,\bp)\mP_1(\bp).\label{eq:d0l1}\ee
Since the subbundle of longitudinal vectors is invariant under the action of $\tilde{D_{(0)}},$ we have
$\mP_1(\bp')\tilde{D_{(0)}}(L,\bp)\mP_0(\bp)=0$ and $\mP_0(\bp')\tilde{D_{(0)}}(L,\bp)\mP_0(\bp)=\tilde{D_{(0)}}(L,\bp)\mP_0(\bp).$ Therefore Eq. (\ref{eq:d0l1}) simplifies to
\be D_0(L,\bp)=\tilde{D_{(0)}}(L,\bp)+\frac{|\bp|}{|\bp'|}\tilde{D_{(0)}}(L,\bp)\mP_0(\bp)-\mP_0(\bp')\tilde{D_{(0)}}(L,\bp).\ee
The matrix $\tilde{D_{(0)}}(L,\bp)\mP_0(\bp)$ we have already calculated - we just set $m=0$ in Eq. (\ref{eq:dp0m}). Using now the fact that $L^T\eta L=\eta,$ a straightforward algebra gives us \be {(\mP_0(\bp')\tilde{D_{(0)}}(L,\bp))^i}_j={\pi'}^i({L^0}_j+\pi_j{L^0}_0),\ee and leads to the formula (\ref{eq:d0}) of Proposition \ref{prop1}.\footnote{Notice that $\tilde{D_{(0)}}(L,\bp)$ gives the first two terms of the right hand side of (\ref{eq:d0}).} The matrices $D_0(L,\bp)$ are orthogonal as limits of a continuous family of orthogonal matrices, and, since the measure $d^3p/p^0$ is Lorentz invariant, the representation $U_0$ is unitary.
\end{proof}
We notice that for pure space rotations: ${L_i}^0={L_0}^j=0,{L_0}^0=1,$ we have ${{D_0}_i}^j={L_i}^j.$
\subsection{The Poincar\'e group Lie algebra}
So far we have discussed only the homogeneous transformations form the Poincar\'e group - the Lorentz transformations from $\cL.$ Now we add translations. They are implemented by complex phase rotations:
\be (U_0(a)\f)(\bp)=e^{ia\cdot p}\f(\bp)=e^{i(a^1p^1+a^2p^2+a^3p^3-a^0p^0)}\f(\bp).\ee
These transformations are evidently unitary on $\H_0.$
Having the unitary representation of the whole group we will calculate now the self-adjoint infinitesimal generators. For translations we define
$P_\mu=-i(dU(a)/da^\mu)|_{a=0},$ obtaining
\be P^i\f(\bp)=p^i\f(\bp),\quad P^0\f(\bp)=|\bp|\f(\bp).\ee

The Lie algebra of the orthogonal group $SO(4)$ consists of antisymmetric matrices $\tilde{m}_{\mu\nu}$ given by
\be (\tilde{m}_{\mu\nu})^{\sigma\rho}=\delta_\mu^\sigma\delta_\nu^\rho-\delta_\nu^\sigma\delta_\mu^\rho.\ee
The matrices $m^{\mu\nu}=\tilde{m}^{\mu\nu}\eta$ form then a basis in the Lie algebra of the Lorentz group
\be {(m_{\mu\nu})^\sigma}_\rho= \delta_\mu^\sigma\eta_{\nu\rho}-\delta_\nu^\sigma\eta_{\mu\rho}.\label{eq:munu}\ee
Defining ${\bf m}=(m^i)$ and ${\bf n}=(n^i)$ through $m^i=\frac12 \epsilon^{ijk}m_{jk},\, n^i=m_{0i},$ we obtain the commutation relations
\be [m^i,m^j]=-\epsilon_{ijk}m^k,\,[m^i,n^j]=-\epsilon_{ijk}n^k,\, [n^i,n^j]=\epsilon_{ijk}m^k.\ee
Then the densely defined self-adjoint operators $M^i,N^i$ are given by
\be M^i=-idU(\exp(t\, m^i))/dt|_{t=0},\, N^i=-idU(\exp(t\, n^i))/dt|_{t=0}.\ee
Using our explicit formulas,  we obtain:
\begin{eqnarray}
{\bf M}&=&{\bf L}+{\bf s},\\
{\bf N}&=&{\bf K}+{\bf k},\label{eq:N}
\end{eqnarray}
where
\be {\bf L}=-i\bp\times \partial/\partial\bp,\quad
{\bf K}=i\om\partial/\partial\bp,\quad  {\bf k}=\bpi\times {\bf s},\ee
\be {\bf s}=-i{\bf m}.\label{eq:s}\ee
The commutation relations obtained from the definitions are the standard ones:
\be [N^i,P^j]=i\delta^{ij}P^0,\,
[N^i,P^0]=iP^i,\,
[M^i,P^j]=i\epsilon_{ijk}P^k,\,
 [M^i,P^0]=0,\ee
\be [N^i,N^j]=-i\epsilon_{ijk}M^k,\,
[M^i,N^j]=i\epsilon_{ijk}N^k,\,
[M^i,M^j]=i\epsilon_{ijk}M^k.\ee
Replacing $\bf{M}$ by $\bf{L}$ and $\bf{N}$ by $\bf{K}$ we obtain the same commutation relations.
\begin{Remark}\label{rem:2}

The representation $U_0$ defined by (\ref{eq:u0})
is unitary on the Hilbert space $\H_0$ of vector-valued functions square integrable with respect to the scalar product
\be <\f,\f'>_0=\int \f^\dagger(\bp)\f'(\bp)\frac{d^3p}{\om}.\label{eq:ssp}\ee
In Sec. \ref{sec:pp} we will use a different scalar product using a non-Lorentz-invariant measure $d^3p.$
To make the representation $U_0$ unitary with respect to this scalar product, we need to adjust $D_0$
by introducing an extra scaling factor and defining
\be D(L,\bp)=\sqrt{\frac{\om}{|L\bp|}}D_0(L,\bp).\ee
The expression for infinitesimal generators for this unitary representation remains the same, except that in the definition of ${\bf K}$ we must replace $i\om\partial/\partial\bp$ by $i\om\partial/\partial\bp+\frac12 \bpi.$

Notice however, that with the scalar product (\ref{eq:ssp}), and with $D_0$ replaced by $D$, the transversal components of $\f(\bp)$ cannot be {\it directly\,} interpreted as tangent vectors, since they do not have the correct transformation properties under boosts.
\end{Remark}
The unitary space inversion operator $\Pi$ and antiunitary time inversion operator $\Theta$ for this representation  are given by
\be (\Pi{\bf f})(\bp)={\bf f}(-\bp),\quad (\Theta{\bf f})(\bp)={\bf f}^*(-\bp).\ee
For a general representation of the Poincar\'{e} group one defines the four-dimensional Pauli-Lubanski pseudovector $W^\mu$ as
\be W_\mu=\frac12 \epsilon_{\nu\rho\sigma\mu}\,P^\nu M^{\rho\sigma}.\label{eq:pl}\ee
We have
\be W^0={\bf P}\cdot{\bf M},\, {\bf W}=P^0\,{\bf M}-{\bf P}\times {\bf N}.\label{eq:w0w}\ee

It follows from the very definition that $\eta_{\mu\nu}P^\mu W^\nu=0.$  For mass zero representations $P^\mu$ is lightlike, and therefore $W^\mu$ is proportional to $P^\mu$:
\be W^\mu=\Lambda P^\mu.\label{eq:hel}\ee
The proportionality operator $\Lambda$ commutes with all the generators and is called the helicity operator (see e.g. \cite[p.64]{ryder}).
In our case one finds that for generators $P^\mu,\bf{L},\bf{K}$ we have $W^\mu=0$, therefore $\Lambda=0,$ while for the generators $P^\mu,\bf{M},\bf{N}$ we have
\be \Lambda=\bpi \cdot\bM=\bpi \cdot\bf{s}, \label{eq:lambda}\ee
or explicitly, using Eq. (\ref{eq:s}),
\be(\Lambda \f)(\bp)
=i\bpi\times \f(\bp) .\ee The spectrum of
$\Lambda$ is discrete and consists of three points $\lambda=\pm 1$
and $0.$ Therefore $\Lambda^2$ is a projection onto the subspace $\H_{ph}$ of $\H_0.$ $\H_{ph}$ is a direct sum of eigenspaces
$\H_{\pm }$ of $\Lambda$ corresponding to eigenvalues
$\lambda=\pm
1.$ The photon states are represented by vectors in $\H_{ph}.$ The
orthogonal complement $\H_l$ of $\H_{ph}$ in $\H_0$ describes a
spinless particle. It follows from these definitions that

\be
\H_{ph}=\{\f\in\H_0:\,\bp\cdot\f (\bp )=0\}\ee
and that
\be \H_l=\{\f\in\H_0:\,\f (\bp )=c(\bp)\bp\} \ee
for some scalar function $c(\bp).$ Since $\Lambda$ commutes with all the generators of the Poincar\'e group, they leave the subspaces $\H_{ph}$ and
$\H_l$ invariant.
\section{Application of the explicit form: photon polarization vectors boosted to the speed of light}
\subsection{Action of pure boosts}\label{boosts}
We will be interested in one-parameter subgroups of Lorentz transformations $L(\n,s),\\ \n\in\BR^3,\n^2=1,\,s\in\BR,$ defined
by
\be L(\n,s)=\exp(-s\, \n\cdot {\bf n})=\exp\left(s\left(\begin{smallmatrix}-w_1&0&0&0\\-w_2&0&0&0\\-w_3&0&0&0\\0&-w_1&-w_2&-w_3\end{smallmatrix}\right)\right).\ee
For the matrix ${L(\n,s)^\mu}_\nu$ we then obtain\footnote{The parameter $s$ is known as {\it rapidity\,}, $s=\mbox{atanh}(\beta),$ $\beta=v/c.$}
\begin{eqnarray}
{L^0}_i &=&{L^i}_0=-w_i\sinh(s),\notag\\
{L^i}_j&=&\delta^i_j+w^iw_j(\cosh(s)-1),\label{eq:lij}\\
{L^0}_0&=&\cosh(s).\notag\end{eqnarray}
For a fixed $\n$ we have the group property:
\be L(\n,s_1)L(\n,s_2)=L(\n,s_1+s_2).\ee
In particular we have $L(\n,s)^{-1}=L(\n,-s).$\\
\noindent For some special sections $\f(\bp)$ of $TV_0^+$ we will be interested in calculating the limits
\be \lim\limits_{s\rightarrow\pm\infty}U_0(L(\n,s))\f.\ee
Having in mind Eq. (\ref{eq:u0}) let \be D_0'(\n,s,\bp)\doteq D_0(L(\n,s),L(\n,s)^{-1}\bp).\ee
The following Proposition shows that the limits  \be
D_0'(\n,\bp)_\pm=\lim\limits_{s\rightarrow \pm\infty}D_0'(\n,s,\bp)\ee
exist, and provides their explicit form.
\begin{Proposition}
For all $\bp$ not parallel to $\n,$ we have
\be D_0'(\n,\bp)_\pm=
\mathbbm{1}\pm
\bpi\n^T-
\frac{\n\n^T
-(\bpi\cdot\n)\bpi\n^T
+\bpi\bpi^T\pm\n\bpi^T}{1\pm \bpi\cdot\n}.\label{eq:dp}\ee
where for any two vectors ${\bf a},{\bf b}$ we denote by ${\bf a}{\bf b}^T$ the matrix with components $a_ib^j.$
\label{prop:d}\end{Proposition}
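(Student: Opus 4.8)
The plan is to prove the Proposition by brute substitution followed by a careful asymptotic analysis, exploiting the closed forms already in hand: the boost matrix elements (\ref{eq:lij}) and the explicit cocycle (\ref{eq:d0}). The stabilizing first observation is that in $D_0'(\n,s,\bp)=D_0(L(\n,s),L(\n,s)^{-1}\bp)$ the \emph{primed} direction of (\ref{eq:d0}) is $\widehat{L(\n,s)\,L(\n,s)^{-1}\bp}=\bpi$, which is independent of $s$; so the only genuinely $s$-dependent direction is the \emph{unprimed} one, $\widehat{L(\n,s)^{-1}\bp}$. The whole problem therefore reduces to controlling this single unit vector jointly with the divergent scalars $\cosh s,\sinh s$ that enter the entries of $L(\n,s)$.

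First I would write $\mathbf q\doteq L(\n,s)^{-1}\bp=L(\n,-s)\bp$ explicitly; inserting (\ref{eq:lij}) with $s\mapsto -s$ gives $\mathbf q=\bp+A\,\n$ with $A=(\n\cdot\bp)(\cosh s-1)+\om\sinh s$. Expanding the norm, $|\mathbf q|=A+(\n\cdot\bp)+O(A^{-1})$, yields for $s\to+\infty$
\be \widehat{\mathbf q}=\n+\frac{\bp-(\n\cdot\bp)\n}{A}+O(A^{-2}),\qquad A\simeq\tfrac12 e^{s}\,(\om+\n\cdot\bp)\to+\infty,\ee
so that $\sinh s/A\to(\om+\n\cdot\bp)^{-1}$ and $\cosh s/A\to(\om+\n\cdot\bp)^{-1}$. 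This is exactly where the hypothesis that $\bp$ is not collinear with $\n$ enters: when $\bp$ is antiparallel to $\n$ the leading coefficient $\om+\n\cdot\bp$ vanishes, the limiting direction degenerates, and the denominator $1+\bpi\cdot\n$ of (\ref{eq:dp}) blows up (and symmetrically $1-\bpi\cdot\n$ for the lower sign when $\bp\parallel\n$).

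The hard part will be that the individual entries ${L^i}_j,{L^i}_0,{L^0}_0$ all grow like $e^{s}$, so the products ${L^i}_0\,\widehat{\mathbf q}_j$ and ${\pi'}^i\widehat{\mathbf q}_j(1-{L^0}_0)$ in (\ref{eq:d0}) are individually divergent; finiteness is a cancellation, not a substitution. I would collect the four terms of (\ref{eq:d0}) and isolate the coefficients of $\cosh s$ and $\sinh s$ carried by the \emph{leading} $\n$-part of $\widehat{\mathbf q}$: these assemble into $\n\n^T(\cosh s-\sinh s)$ and $\bpi\n^T(\sinh s-\cosh s)$, both of which vanish in the limit because $\cosh s-\sinh s=e^{-s}\to0$. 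Finiteness of $D_0'(\n,\bp)_+$ thus rests entirely on retaining the $O(A^{-1})$ correction of $\widehat{\mathbf q}$, which, amplified by the divergent prefactors $\sinh s$ and $1-\cosh s$ (with $\sinh s/A,\ \cosh s/A\to(\om+\n\cdot\bp)^{-1}$), produces the surviving, momentum-dependent remainder.

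Finally, having shown the divergences cancel, I would assemble the finite survivors into
\be \mathbbm{1}-\n\n^T+\bpi\n^T-\frac{(\bpi+\n)\big(\bp-(\n\cdot\bp)\n\big)^T}{\om+\n\cdot\bp},\ee
then substitute $\bp=\om\,\bpi$ and $\n\cdot\bp=\om\,(\bpi\cdot\n)$ so that every factor of $\om$ cancels, and regroup the dyadics into the stated form (\ref{eq:dp}); the denominator $\om+\n\cdot\bp$ becomes $\om(1+\bpi\cdot\n)$, matching $1+\bpi\cdot\n$. The limit $s\to-\infty$ is handled by the identical computation with the lower signs throughout, where now $\cosh s+\sinh s=e^{s}\to0$ supplies the cancellation. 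No separate check of orthogonality (\ref{eq:dort}) is needed: each $D_0(L,\bp)$ is orthogonal by Proposition~\ref{prop1}, and $D_0'(\n,\bp)_\pm$ is orthogonal as a limit of orthogonal matrices.
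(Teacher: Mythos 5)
Your proposal is correct and takes essentially the same route as the paper: both start from the explicit formula for $D_0(L,L^{-1}\bp)$ (Eq. (\ref{eq:d0p}), whose key feature—as you note—is that the primed direction becomes the $s$-independent $\bpi$), substitute the boost matrices (\ref{eq:lij}), and evaluate the $s\to\pm\infty$ limit by direct algebra. The only difference is bookkeeping of that limit: the paper keeps the entries as exact rational expressions in $\cosh s,\sinh s$ (cancelling the $\sinh s\cosh s$ terms, using $\cosh^2 s-\sinh^2 s=1$, then letting $\tanh s\to\pm1$), whereas you organize the same cancellations as a first-order asymptotic expansion of $\widehat{L(\n,s)^{-1}\bp}$ in $1/A$; your limiting dyadic expression indeed regroups to the stated form (\ref{eq:dp}).
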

\begin{proof}
The proof is by a straightforward (though tedious) algebra. From the definition of $D'_0$ and Eq. (\ref{eq:d0}) we have the following explicit expression for $D_0(L,L^{-1}\bp)$:
\be {{D_0(L,L^{-1}\bp)}^i}_j(L,\bp)={L^i}_j+
{L^i}_0\pi'_j-
{\pi}^i{L^0}_j+
{\pi}^i\pi'_j(1-{L^0}_0),\label{eq:d0p}\ee
where now $\bpi'=\frac{L^{-1}\bp}{|L^{-1}\bp|}.$
Substituting there $L(\n,s)$ for $L,$ we obtain fractions containing  $\cosh(s), \sinh(s),$ their squares and products. The terms with  $\sinh(s)\cosh(s)$ cancel out, while the terms with $\cosh^2(s), \sinh^2(s)$ collect so that we can apply the identity $\cosh^2(s)-\sinh^2(s)=1$. Dividing the numerators and denominators (which are now linear in $\cosh(s)$ and $\sinh(s)$)  by $\cosh(s)$ we use the fact that
$\lim\limits_{s\rightarrow \pm \infty} \tanh(s)=\pm 1$ to obtain the result.
\end{proof}
\subsection{The polarization basis}
Using the Euclidean metric we will silently rise and lower the space indices with the Kronecker deltas $\delta^{ij}$ and $\delta_{ij}.$

Let $\m$ be a unit ($\m\cdot\m=1$) vector in $\BR^3$, and let $\bp$ be a non-zero vector in $\BR^3$, $\bp$ not parallel to $\m.$ Define the following two vectors $e^\m_1(\bp),e^\m_2(\bp)$ in $\BR^3$  as\footnote{Cf. Ref. \cite[Eq. (3.3)]{bb1}}
\begin{eqnarray}
e^\m_1(\bp)&=&\frac{\bp\times\m}{\vert \bp\times\m\vert},\\
e^\m_2(\bp)&=&\frac{\bp\times(\bp\times \m)}{\vert\bp\times(\bp\times \m)\vert}\label{eq:em2}.
\end{eqnarray}
Adding the third vector field $e_\parallel$ defined as
\be e_\parallel(\bp)=\bpi,\ee
we obtain an orthonormal basis, which can be considered as an orthonormal moving frame in the (real) tangent bundle of the positive light cone in momentum space. The sections $e^\m_1$ and $e^\m_2$ of the tangent bundle $TV^+_0$ become singular on the straight line determined by $\m.$
\begin{Remark}
One can check that $\Lambda e_\parallel(\bp)=0,$ and that the states $e^\m_1(\bp)\pm i e^\m_2(\bp)$ are eigenstates of the helicity operator to the eigenvalue $\pm 1:$
\be \Lambda  (e^\m_1\pm i e^\m_2)=\pm (e^\m_1\pm i e^\m_2).\ee
\end{Remark}
\begin{Proposition}\label{prop2}
The limits ${e^\m_\alpha}^\pm=\lim\limits_{s\rightarrow \pm\infty} U_0(L(\n,t))\,e^\m_\alpha,\,\alpha=1,2,$ exist, and are given by
\be {e^\m_1}^\pm= \frac{\pm 1}{\sqrt{1-(\n\cdot\m)^2}}\left(\n\times\m-\frac{\bpi\cdot(\n\times\m)}{1\pm\bpi\cdot\n}(\bpi\pm\n)\right),\ee
\be {e^\m_2}^\pm=  \frac{1}{\sqrt{1-(\n\cdot\m)^2}}\left(\n\times(\n\times\m)-\frac{\bpi\cdot(\n\times(\n\times\m))}{1\pm\bpi\cdot\n}(\bpi\pm\n)\right).\ee
Moreover, the vector fields $e_\m^\alpha$ are invariant under the boosts in $\n$-direction. We have
\be (\n\cdot{\bf N})\, e^\m_\alpha=0,\ee
where ${\bf N}$ is the boost generator (\ref{eq:N}).
\end{Proposition}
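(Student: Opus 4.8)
The plan is to reduce the existence and the explicit form of the two limits to Proposition~\ref{prop:d}, and to reduce the invariance statement to the group law for $U_0$. By the definition (\ref{eq:u0}) we have $(U_0(L(\n,s))e^\m_\alpha)(\bp)=D_0'(\n,s,\bp)\,e^\m_\alpha(L(\n,s)^{-1}\bp)$, so for each fixed $\bp$ the limit factorizes into the limit of the matrix $D_0'(\n,s,\bp)$, already supplied by (\ref{eq:dp}), times the limit of the vector $e^\m_\alpha(L(\n,s)^{-1}\bp)$. Both factors live in finite-dimensional spaces, so once each converges the product converges to the product of the limits; the only real work is identifying the second factor.

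First I would track the base point. Writing $L(\n,s)^{-1}=L(\n,-s)$ and inserting (\ref{eq:lij}) with $p^0=\om$ gives, in vector form, $L(\n,-s)\bp=\bp+\n\big[(\n\cdot\bp)(\cosh s-1)+\om\sinh s\big]$. Since $\om=|\bp|>|\n\cdot\bp|$ whenever $\bp$ is not parallel to $\n$, the bracket diverges to $+\infty$ as $s\to+\infty$ and to $-\infty$ as $s\to-\infty$, so the \emph{direction} of $L(\n,s)^{-1}\bp$ tends to $\pm\n$ respectively. Because $e^\m_\alpha$ is homogeneous of degree zero and continuous away from the line through $\m$, and $\pm\n$ is not parallel to $\m$ (here $(\n\cdot\m)^2\neq1$ is exactly the hypothesis $\sqrt{1-(\n\cdot\m)^2}\neq0$), I obtain $e^\m_\alpha(L(\n,s)^{-1}\bp)\to e^\m_\alpha(\pm\n)$. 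Using $|\n\times\m|=|\n\times(\n\times\m)|=\sqrt{1-(\n\cdot\m)^2}$, these limiting vectors are $e^\m_1(\pm\n)=\pm(\n\times\m)/\sqrt{1-(\n\cdot\m)^2}$ (odd in the sign) and $e^\m_2(\pm\n)=(\n\times(\n\times\m))/\sqrt{1-(\n\cdot\m)^2}$ (even).

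The formulas then drop out by applying $D_0'(\n,\bp)_\pm$ of (\ref{eq:dp}) to these vectors. The computation is short because $\n\times\m$ and $\n\times(\n\times\m)$ are both orthogonal to $\n$: every term of (\ref{eq:dp}) ending in a factor $\n^T$ annihilates them, and what survives is simply $D_0'(\n,\bp)_\pm\,\bv=\bv-\frac{\bpi\cdot\bv}{1\pm\bpi\cdot\n}(\bpi\pm\n)$ for any $\bv\perp\n$. Substituting $\bv=\n\times\m$ (carrying the odd prefactor $\pm$) and $\bv=\n\times(\n\times\m)$ reproduces precisely the stated ${e^\m_1}^\pm$ and ${e^\m_2}^\pm$.

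For the invariance of the boost eigenmodes I would avoid brute differentiation and argue from the group law. The mode ${e^\m_\alpha}^\pm$ is the $s\to\pm\infty$ limit of the orbit $U_0(L(\n,s))e^\m_\alpha$, and shifting the rapidity by any fixed $t$ leaves that limit unchanged; hence ${e^\m_\alpha}^\pm$ is a fixed point of the entire one-parameter boost subgroup, $U_0(L(\n,t)){e^\m_\alpha}^\pm={e^\m_\alpha}^\pm$, and differentiating at $t=0$ yields $(\n\cdot\bN){e^\m_\alpha}^\pm=0$. The step needing care is the interchange of $U_0(L(\n,t))$ with the pointwise-in-$\bp$ limit; I would make it rigorous by evaluating $(U_0(L(\n,t)){e^\m_\alpha}^\pm)(\bp)$ from (\ref{eq:u0}), pulling the fixed matrix $D_0(L(\n,t),L(\n,t)^{-1}\bp)$ inside the limit, and recognizing the product as $U_0(L(\n,t))U_0(L(\n,s))e^\m_\alpha=U_0(L(\n,t+s))e^\m_\alpha$ via the representation property of $U_0$, whose limit is again ${e^\m_\alpha}^\pm$. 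Thus the algebra is not the obstacle — it collapses by orthogonality to $\n$ — the only delicate point is this interchange, which the representation law settles.
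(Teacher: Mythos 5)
Your proof is correct and follows essentially the same route as the paper's: reduce existence to Proposition \ref{prop:d}, identify the limits (\ref{eq:nm})--(\ref{eq:nnm}) of $e^\m_\alpha(L(\n,s)^{-1}\bp)$ (your base-point/homogeneity argument is just a cleaner packaging of the paper's $\tanh(s)\to\pm 1$ computation), and apply the limiting matrix $D_0'(\n,\bp)_\pm$, where your observation that terms ending in $\n^T$ annihilate vectors orthogonal to $\n$ makes explicit the algebra the paper leaves unstated. Your group-law argument for $(\n\cdot{\bf N})\,{e^\m_\alpha}^\pm=0$, including the justification for interchanging $U_0(L(\n,t))$ with the pointwise limit, is precisely the ``immediate consequence of the definitions'' the paper invokes without detail.
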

\begin{proof}
We know from Proposition \ref{prop:d} that the limits $D_0'(\n,\bp)_\pm$ exist. On the other hand, using the definitions of $e^\m_\alpha$ and $L(\n,s)^{-1}=L(\n,-s),$ as well as the fact that $\lim\limits_{s\rightarrow \pm\infty} \tanh(s)=\pm 1,$ we obtain
\be \lim\limits_{s\rightarrow\pm\infty} e^\m_1(L(\n,s)^{-1}\bp)=\frac{\pm\n\times\m}{\sqrt{1-(\n\cdot\m)^2}},\label{eq:nm}\ee
\be \lim\limits_{w\rightarrow\pm\infty} e^\m_2(L(\n,w)^{-1}\bp)=\frac{\n\times(\n\times\m)}{\sqrt{1-(\n\cdot\m)^2}}.\label{eq:nnm}\ee
The result follows then by an application of the matrix of $D_0'(\n,\bp)_\pm$ given by formula (\ref{eq:dp}) to the vectors (\ref{eq:nm}) and (\ref{eq:nnm}). The last statement of the proposition is the immediate consequence of the definitions.
\end{proof}
\subsubsection{Lorentz-boost eigenmodes}\label{sec:ex}
Let us take for $\n$ the vector
$$ \n=(0,0,1),$$ (boost in the direction of the third axis), and for $\m$ the vector
$${\bf a}=(0,1,0).$$ We consider the case of $\lim\limits_{w\rightarrow -\infty}.$ Using the formulas (\ref{eq:nm}) and (\ref{eq:nnm}) we obtain the following two mutually orthogonal unit vector fields:
\be
e_1(\bp)=\frac{1}{1-\pi^3}\begin{pmatrix}1-\pi^3-(\pi^1)^2\\-\pi^1\pi^2\\ (1-\pi^3)\pi^1\end{pmatrix},\quad
e_2(\bp)=\frac{1}{1-\pi^3}\begin{pmatrix}\pi^1\pi^2\\-(1-\pi^3)+(\pi^2)^2\\- (1-\pi^3)\pi_2\end{pmatrix}.\label{eq:e12}\ee
These two vector fields are transversal; they define a photon polarization basis (notice that while each of the original vector fields $e^\m_1,e^\m_2$ has two singular points on the unit sphere, in their speed-of-light limits these two singularities merge into just one---the minimum required by the ``hairy ball theorem'' of algebraic topology).  Together with the third vector~field

\vspace{-3pt}
\be e_3(\bp)=\bpi,\label{eq:e3}\ee
which is longitudinal and spans the helicity-zero subspace, they form an orthonormal basis in $T_\bp V_0^+$---cf. Fig. \ref{fig0}.
 \begin{figure}[!ht]
 \includegraphics[width=12.5cm, keepaspectratio=true]{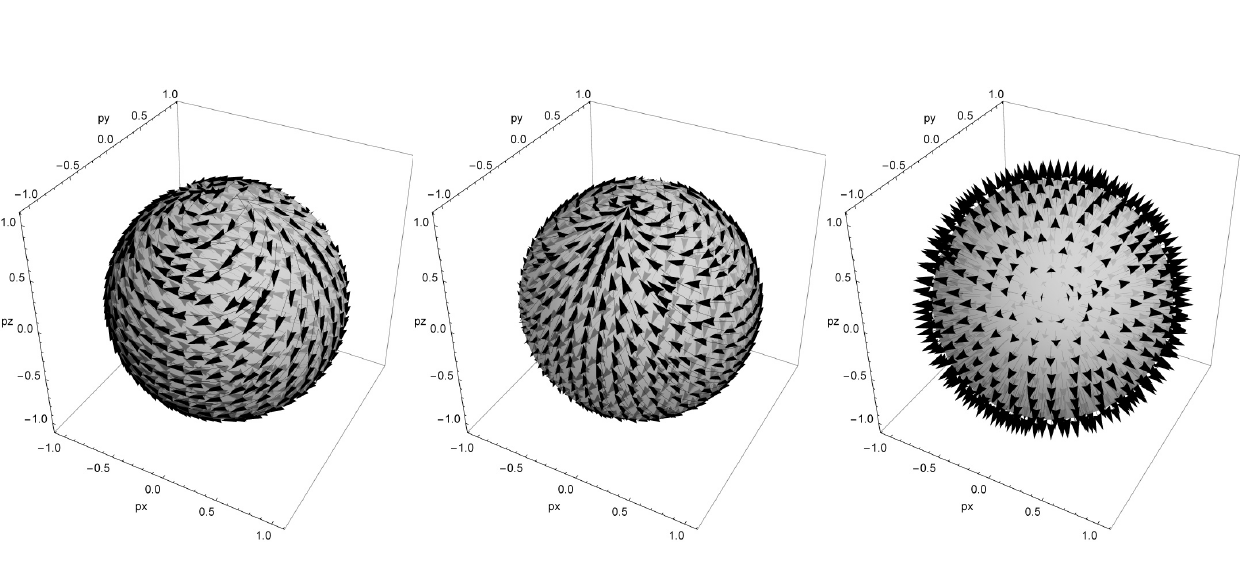}
 \caption{The vector fields $e_1(\bp),e_2(\bp),e_3(\bp)$ plotted at the unit sphere in momentum space.}\label{fig0}
\end{figure}

It is easily verified that
\be M^3 e_1=-i e_2,\, M^3 e_2=i e_1,\, M^3 e_3=0,\label{eq:m3}\ee
and
\be N^3 e_1=N^3 e_2=N^3 e_3=0.\label{eq:n30}\ee
\begin{Remark}The vector fields $e_1,e_2$ are real. Introducing their complex combinations
\be e_\pm=\frac{1}{\sqrt{2}}(e_2\pm i e_1),\ee
we obtain two unit complex vector fields of definite helicity and the third component of angular momentum:
\be M^3 e_\pm=\mp e_\pm,\, \Lambda e_\pm=\pm e_\pm.\ee
Evidently we also have
\be N^3 e_\pm=0.\ee
\end{Remark}
The basis $e_\alpha,$ is naturally embedded into a one-parameter family of Lorentz-boost eigenmodes\footnote{For the scalar wave equation some of their properties have been discussed in Ref. \cite{bliokh}.} $e^{(\lambda)}_\alpha$ defined by the formula below:
\be e^{(\lambda)}_\alpha(\bp)=\exp\left(i \lambda \log\sqrt{ \frac{|\bp|-p_3}{|\bp|+p_3}}\right)\,e_\alpha(\bp),\, (\alpha=1,2,3),\,\lambda\in\BR.\ee
A straightforward verifications shows that $e^{(\lambda)}_\alpha$ still satisfies Eq. (\ref{eq:m3}), but Eq. (\ref{eq:n30}) is replaced by
\be N^3 e^{(\lambda)}_\alpha=\lambda\,e^{(\lambda)}_\alpha.\label{eq:n3l}\ee
\begin{Remark}
In fact one can show that every solution of the eigenvalue problem
\be N^3 f=\lambda f\ee
is of the form
\be f(\bp)=\sum_{\alpha=1}^3 c_\alpha(p_1,p_2)e^{(\lambda)}_\alpha(\bp).\ee
\end{Remark}
\subsection{The teleparallel connection\label{sec:tpc}}
In the following we will use the notation, assumptions and results of Sec. \ref{sec:ex}.
\subsubsection{Stereographic coordinates on the lightcone\label{sec:sc}}
 We will use the moving frame $e_\alpha(\bp),\,(\alpha=1,2,3)$ on $V_0^+$ to define teleparallel affine connection on $V_0^+.$ The frame has a singularity at the points $\bp=(0,0,t),\, t>0,$  and the natural rotational covariance with respect to rotations around the third axis.
It takes especially simple form in a coordinate system $X=(x,y,\omega)$ using the stereographic projection from the unit sphere in the momentum space. \footnote{Stereographic projection coordinates are also being used, in a similar context, in Refs. \cite{staru1},\cite{staru2}, but the coordinates $x,y$ in these papers are twice ours $x,y$, as the projection plane in these papers is positioned at the bottom of the unit sphere, and not through the center.}

We set
\begin{eqnarray}
p^1&=& \frac{2x\omega}{x^2+y^2+1},\\
p^2&=& \frac{2y\omega}{x^2+y^2+1},\\
p^3&=& \frac{x^2+y^2-1}{x^2+y^2+1}\,\omega,
\end{eqnarray}
and the inverse transform
\begin{eqnarray}
x&=&\frac{\pi^1}{1-\pi^3},\\
y&=&\frac{\pi^2}{1-\pi^3},\\
\omega&=&|\bp|.
\end{eqnarray}
Given a coordinate system $X^i$ and a vector $\xi$ with coordinates $\xi^i,$ we have the standard transformation law to another coordinate system $X^{i'}$:
\be \xi^{i'}=\frac{\partial X^{i'}}{\partial X^i}\xi^i.\ee
Applying this law to the vectors $e_1,e_2, e_3$ we obtain their components in stereographic coordinates $\omega,x,y$;\footnote{Surprisingly they happen to essentially coincide with the basis $\vec{E}_1,\vec{E}_2,\vec{E}_3$
considered in Ref. \cite[Eq. (2.42)]{dob1}: $e_1=-\vec{E}_1,\, e_2=-\vec{E}_2,\,e_3=\vec{E}_3.$ In fact we have $e_1=\widehat{\partial_x},\,e_2=-\widehat{\partial_y},\,e_3=\widehat{\partial_\omega}=\partial_\omega,$ where the hat over a vector denotes the unit vector in its direction.}
\be e_1=\begin{pmatrix}\frac{1+x^2+y^2}{2\omega}\\0\\0\end{pmatrix},\,
e_2=\begin{pmatrix}0\\-\frac{1+x^2+y^2}{2\omega}\\0\end{pmatrix},\,
e_3=\begin{pmatrix}0\\0\\1\end{pmatrix}.\label{eq:ea}\ee
The Euclidean metric in stereographic coordinates has the form
\be g=(g_{ij})=
\begin{pmatrix}
\frac{4\omega^2}{(1+x^2+y^2)^2}&0&0\\
0&\frac{4\omega^2}{(1+x^2+y^2)^2}&0\\
0&0&1
\end{pmatrix},\ee
\be
g^{-1}=(g^{ij})=
\begin{pmatrix}
\frac{(1+x^2+y^2)^2}{4\omega^2}&0&0\\
0&\frac{(1+x^2+y^2)^2}{4\omega^2}&0\\
0&0&1\\
\end{pmatrix},
\ee
while the Lorentz invariant degenerate metric ${g_0}_{ij}=\delta_{ij}-\pi_i\pi_j$ is obtained from $g_{ij}$ by replacing $1$ with $0$ in the right bottom corner.
We can use the metric $g_{ij}$ to obtain the corresponding dual basis  $e^1,e^2,e^3$ in the cotangent bundle:
\be e^1=\begin{pmatrix}\frac{2\omega}{1+x^2+y^2}\\0\\0\end{pmatrix},\,
e^2=\begin{pmatrix}0\\ -\frac{2\omega}{1+x^2+y^2}\\0\end{pmatrix},\,
e^3=\begin{pmatrix}0\\0\\1\end{pmatrix}.\ee
The Lorentz invariant volume form $vol =d^3p/|\bp|$ becomes
\be vol = \frac{4\omega}{(1+x^2+y^2)^2}\,d\omega\,dx\,dy.\ee
\subsubsection{The connection coefficients}
The moving frame $e_\alpha$ defines a unique flat affine connection on $V_0$ in which the vector fields $e_\alpha$ are parallel:
\be \nabla_i e_\alpha^k\doteq\partial_i e_\alpha^k+\Gamma_{ij}^k\, e_\alpha^j=0.\ee
The coefficients $\Gamma_{ij}^k$ of this teleparallel connection,   in the coordinate system $(X^i)=(x,y,\omega)$ are then given by
\be {\Gamma_i}_j^k=e_\alpha^k\frac{\partial e_j^\alpha}{\partial X^i},\ee
where $e^\alpha$ is the dual basis, thus ${e_i}^\alpha {e^j}_\alpha=\delta^j_i.$

A straightforward calculation gives then the following expressions:
\be
\Gamma_1=\frac{-2x}{a}\begin{pmatrix}1&0&0\\0&1&0\\0&0&0\end{pmatrix},\,
\Gamma_2=\frac{-2y}{a}\begin{pmatrix}1&0&0\\0&1&0\\0&0&0\end{pmatrix},
\Gamma_3=\frac{1}{\omega}\begin{pmatrix}1&0&0\\0&1&0\\0&0&0\end{pmatrix},\,
\label{eq:gg}\ee
where $a=1+x^2+y^2.$ Our connection $\Gamma$ has the properties $\nabla_i g =\nabla_i g_0=0,$ and $\nabla_i vol=0,$ therefore it should have come under the scope of affine connections discussed by Staruszkiewicz in Refs. \cite{staru1,staru2}.
However, as we show in the paragraph below, it  is not `semi-symmetric' -- an extra condition imposed on the class of connections analyzed by Staruszkiewicz.

\subsubsection{The connection $\Gamma_{ij}^k$ given by Eq. (\ref{eq:gg})  is not semi-symmetric.}Let $M$ be an $n$-dimensional manifold with a coordinate system $x^i.$ A linear connection $\nabla$ with connection coefficients $\Gamma_{ij}^k$ is said to be semi-symmetric if its torsion tensor $T_{ij}^k=\Gamma_{ij}^k-\Gamma_{ji}^k$ is of the form \be T_{ij}^k=\delta_i^k \tau_j-\delta_j^k \tau_i,\ee  $\tau$ being a $1$-form. If that is the case, contracting the indices $i,k$ we get $\tau_j=\frac{1}{n-1}T_{ij}^i.$ In our case $n=3$, therefore  \be \tau_j=\frac12 T_{ij}^i.\ee For the connection given by Eq. (\ref{eq:gg}) we get, for instance, $\tau_2=-y/(1+x^2+y^2).$   But for a semi-symmetric connection we should have, for instance, $T_{12}^1=\delta_1^1 \tau_2-\delta_2^1 \tau_1=\tau_2,$
while from Eq. (\ref{eq:gg}) we have $T_{12}^1=\Gamma_{12}^1-\Gamma_{21}^1=0-0=0.$ Thus our connection is not semi-symmetric.
\section{Photon position operator with commuting components and axial symmetry \label{sec:pp}}
To discuss the photon localization it is more convenient to work in representation in which the scalar product in the Hilbert space of sections of the tangent bundle $TV_0^+$ is given by the formula (c.f. Remark \ref{rem:2}.)
\be (f,f')=\int_{\BR^3}f(\bp)^\dagger f'(\bp)d^3p.\label{eq:sc0}\ee
Since the measure $d^3p$ is not Lorentz-invariant the formula for the boost operator gets now an extra term (comparing to Eq. (\ref{eq:N}), and takes the form
\be {\bf N}={\bf K}+\frac{i}{2}\bpi+{\bf n}.\label{eq:N0}\ee
As a consequence the sections $\bp\mapsto e_{\alpha}(\bp)$ do not any longer satisfy Eq. (\ref{eq:n30})  - but $\bp\mapsto |\bp|^{-1/2}e_{\alpha}(\bp)$ do.

\noindent {\bf In what follows we will be using the scalar product (\ref{eq:sc0}). We will denote $\H$ the corresponding Hilbert space: $\H=L^2(\BR^3,d^3p)\otimes\BC^3.$}

\noindent The map $f\mapsto |\bp|^{1/2}f$ is an isometry between the two Hilbert spaces, the space $\H$ with the scalar product defined with the measure $d^3p,$ and $\H_0,$ with the scalar product $d^3p/p_0.$

\subsection{Position operators as covariant derivatives}
There is a straightforward relation between the position operator and a covariant derivative concept in vector bundles. Indeed, the main property required from any position operator $Q_i$ is the property of satisfying the canonical commutation relations with the momentum operators $P_i$:

\be [Q_j,P^k]=i\delta_j^k\,I.\ee

In our case, it implies that for any section $\f(\bp)$ of $T(V_0^+)$ and any scalar function $\phi(\bp)$ we have
\be -i\,Q_j(\phi\f)(\bp)=\frac{\partial \phi(\bp)}{\partial p_j}\,\f(\bp)+\phi(\bp)(Q_j\f)(\bp).\label{eq:lei}\ee
Thus the operators $-iQ_i$ have the Leibniz rule property, the main property defining a covariant derivative $\nabla_i$ in a vector bundle (cf. e.g. \cite[p. 89, Eq. (1.1)]{hermann}). For $Q_i$ to be Hermitian, $\nabla_i$ must be anti-Hermitian, and for this to be the case, the (linear) connection determined by $\nabla_i$  should preserve the fiber scalar product $(\f(\bp),\f'(\bp))_\bp=\f(\bp)^\dagger \f'(\bp).$ Then \be {\bf Q}=i\boldsymbol{\nabla},\label{eq:qid}\ee or, explicitly
\be (Q_j\f)^k(\bp)=i\partial_j f^k(\bp)+i\Gamma_{il}^k(\bp)f^l(\bp).\label{eq:Qd}\ee
\subsection{The Pryce connection and operator - geometric construction}
 There is a standard construction in the differential geometry of vector bundles that results in a canonical connection adapted to a split of a trivial vector bundle into a direct sum of its two vector sub-bundles (cf. e.g. \cite[p. 319, 4.]{greub}, \cite[Exercise 10]{crainic}). We adapt this standard construction to our purpose as follows. First, using the global coordinates $p^i$,  we realize the tangent bundle $TV_0^+$ as a trivial product bundle $V_0^+\times \BR^3.$ In the trivial bundle we have a canonical covariant derivative $\partial_i=\frac{\partial}{\partial p^i}.$
 Our trivial bundle is naturally split into a direct sum of the helicity zero sub-bundle and the helicity $\pm 1$ sub-bundle of photon states. Let $P(\bp)$ denote the orthogonal projection on the helicity zero states
 \be (P(\bp)f)^i(\bp)=\pi^i\pi_jf^j(\bp).\label{eq:P}\ee
 Then $I-P$ is the orthogonal projection on the complementary sub-bundle of photon states.
 The natural covariant derivative {\it adapted to this splitting} is then defined by the formula\footnote{Cf. \cite[Eq. 2]{jj}. We use the label ${}^{\mbox{PR}}$ to mean either ``Projection" or ``Pryce".}
 \be \nabla_i^{\mbox{PR}}=P d_i P+(I-P)\partial_i(I-P).\label{eq:cd}\ee
 Using the idempotent property $P^2=P,$ we find a simpler form:
 \be \nabla_i^{\mbox{PR}}=\partial_i+[P,\partial_i(P)],\label{eq:pdp}\ee
 while substituting the explicit form (\ref{eq:P}) of $P$ leads to
 \be (\nabla_i^{\mbox{PR}}\f)^j=\partial_if^j+\frac{1}{|\bp|}\left(\pi^j\delta_{ik}-\pi_k\delta_i^j\right)f^k,\label{eq:gpr}\ee
 thus the corresponding connection coefficients are given by
 \be {\Gamma^{\mbox{PR}}}_{ik}^j={{[P,\partial_i(P)]}^j}_k=\frac{1}{|\bp|}\left(\pi^j\delta_{ik}-\pi_k\delta_i^j\right).\label{eq:Gpr}\ee
\subsubsection{The Pryce connection is metric semi-symmetric}
A {\it metric connection\,} is semi-symmetric - cf. e.g. \cite{yano,yilmaz} and references therein - if its connection coefficients $\Gamma_{ij}^k$ are of the form
\be  \Gamma_{ik}^j=
{\overline{\Gamma}}_{ik}^j
 +\delta_i^j \tau_k-g_{ik}\tau^j\label{eq:gss},\ee
where ${\overline{\Gamma}}_{ik}^j$ are the coefficients os the Levi-Civita connection of the metric,  and where $\tau_i$ and $\tau^i=g^{ij}\tau_j$ are covariant and contravariant components of a vector field, respectively. In that case the curvatures of the two connections are related by\footnote{For the curvature tensor components of a connection $\nabla$ we use the convention
$ {{R_{ij}}^k}_l\xi^l=\left((\nabla_i\nabla_j-\nabla_j\nabla_i)\xi\right)^k.$}
\be {{R_{ij}}^k}_l={{\overline{R}_{ij}}^k}_l+\delta_j^k\tau_{il}-\delta_i^k\tau_{jl}+\delta_{il}{\tau_j}^k-\delta_{jl}{\tau_i}^k,\ee
where
\be \tau_{ij}=\nabla_i\tau_j-\tau_i\tau_j+\frac12 g_{ij}\tau^2,\ee
and $\tau^2=g^{ij}\tau_i\tau_j.$

In our case $g_{ij}=\delta_{ij}$ and the connection coefficients ${\overline{\Gamma}}_{ik}^j$ are all zero. Comparing Eqs. (\ref{eq:gss}) and (\ref{eq:Gpr}) we can see that the connection $\nabla^{\mbox{PR}}$ is metric semi-symmetric, with
\be \tau_i=-\pi_i/\om,\label{eq:wi}.\ee
We then get
\be \pi_{ij}=\pi_{ji}=\frac{1}{\om^2}\left((\Sigma^2)_{ij}+\frac12 \delta_{ij}\right),\ee
where
\be \Sigma_{ij}=-\epsilon_{ijk}\pi^k,\ee
and the curvature tensor simplifies to
\be R^{\mbox{PR}}_{ijkl}=\frac{1}{\om^2}\Sigma_{ij}\Sigma_{kl}.\label{eq:prc}\ee
From Eq. (\ref{eq:gss}) we get for the torsion
\be {T^{\mbox{PR}}}_{ij}^k=\delta_i^k\tau_j-\delta_j^k\tau_i,\label{eq:tpr}\ee
where $\tau_i$ is given by (\ref{eq:wi}).
\subsubsection{The difference between the teleparallel and the Pryce connection}
The difference of two connections is a tensor. In our case it is a matter of a straightforward calculations to find this tensor for the teleparallel connection $\Gamma$ defined by Eq. (\ref{eq:gg}) and the Pryce connection given by Eq. (\ref{eq:gpr}). We obtain
\be \Gamma_{ij}^k-{\Gamma^{\mbox{PR}}}_{ij}^k=b_i{\Sigma^k}_j,\label{eq:df1}\ee
where
\be {\bf b}=\frac{\bpi\times{\bf \n}}{\om-\bp\cdot{\bf \n}},\quad \n=(0,0,1).\label{eq:df2}\ee
\subsubsection{The Pryce operator}
We now use Eqs. (\ref{eq:qid}) and (\ref{eq:cd}). Denoting
\be {\bf X}=i\frac{\partial}{\partial \bp},\label{eq:xi}\ee
the position operator corresponding to $\nabla^{\mbox{PR}}$ is
\be {\bf Q}^{\mbox{PR}}\doteq i\boldsymbol{\nabla}^{\mbox{PR}}=P {\bf X} P+(I-P){\bf X}(I-P)\label{eq:prp},\ee
or, explicitly,
\be ({\bf Q}^{\mbox{PR}}_j\f)^k=i\partial_j f^k+ \frac{i}{|\bp|}\left(\pi^k\delta_{jl}-\pi_l\delta_j^k\right)f^l.\ee
Using Eqs. (\ref{eq:s}),(\ref{eq:munu}) it can be also written as
\be {\mathbf Q}^{\mbox{PR}}={\mathbf X}+\frac{(\bp\times {\bf s})}{|\bp|^2},\ee
and, in this form, it is known as the Pryce position operator. The following alternative formula for this operator, expressing it in terms of the Poincar\'e group generators, is easily verified and is well known
\be {\bf Q}^{\mbox{PR}}=\frac12\left(\frac{1}{P^0}{\bf N}+{\bf N}\frac{1}{P^0}\right)\label{eq:qn},\ee
where $N^i$ are the boost generators (\ref{eq:N0}).
It should be mentioned that even though the bundle $TV_0^+$ is trivial, the non-triviality of its splitting defined by the helicity-related  projection $P$ is reflected in the fact that the adapted connection has a non-zero curvature, and, as a consequence, the components of the Pryce operator do not commute.
\begin{Remark}
The flat covariant derivative $d$ does not mix the two photon $\pm 1$ helicities. As a consequence the Pryce operator commutes with the helicity operator $\Lambda,$ and not only with its square $\Lambda^2=I-P.$
\end{Remark}
Using now Eq. (\ref{eq:prc}) we get
\be ([Q_i,Q_j]\f)^k=-\frac{1}{\om^2}\Sigma_{ij}{\Sigma^k}_l f^l,\ee
which is usually written as (see e.g. \cite[Eq. (5,9), p. 40]{bacry})\footnote{In Ref. \cite{bacry} helicity is defined with the opposite sign.}
\be [Q_i,Q_j]=i\epsilon_{ijk}\frac{p_k}{\om^3}\Lambda,\ee
where $\Lambda=i\Sigma$ is the helicity operator.
\subsection{The Hawton-Baylis operator}
The teleparallel connection described in Sec. \ref{sec:tpc} leads to the Hawton-Baylis photon position operator $Q^{\mbox{HB}}_i$ with commuting components
\be Q^{\mbox{HB}}_i=i\nabla_i,\label{eq:qnhb}\ee
with the defining relations
\be \nabla_ie_\alpha=0,\,(i=1,2,3),\,(\alpha=1,2,3),\label{eq:qe}\ee
where $e_\alpha$ are  given by Eqs. (\ref{eq:e12})-(\ref{eq:e3}). It follows from Eqs. (\ref{eq:qnhb}) and (\ref{eq:qe}) that the states $e_\alpha(\bp)$ are localized at $\bx=0.$ To obtain states localized at any point $\bx={\bf a}$ we multiply these states by $\exp (-i\bp\cdot{\bf a}).$

From Eqs. (\ref{eq:df1}),(\ref{eq:df2}) we then get\footnote{C.f. \cite[Eq. 53]{bh} and \cite[Eq. 2.49]{dob1}).}
\be Q^{\mbox{HB}}_i\f=Q^{\mbox{PR}}_i\f+b_i\Lambda\f.\ee
In particular
\be Q^{\mbox{HB}}_3=Q^{\mbox{PR}}_3.\label{eq:qrq}\ee
Using this last equality, together with Eqs. (\ref{eq:qn}) and (\ref{eq:N0}) we recover the fact, mentioned after Eq. (\ref{eq:N0}), that the states $\tilde{e}_\alpha=\om^{-1/2}e_\alpha$ are invariant under boosts in the direction of the third axis. Indeed, from the Leibniz property (\ref{eq:lei}) and from (\ref{eq:qe}) we get
\be  Q^{\mbox{HB}}_i\tilde{e}_\alpha=-\frac{i\pi_i}{2\om}\tilde{e}_\alpha.\ee
Thus also \be Q^{\mbox{PR}}_3\tilde{e}_\alpha=-\frac{i\pi_3}{2\om}\tilde{e}_\alpha.\label{eq:q3e}\ee On the other hand, using (\ref{eq:qn}) and the Lie algebra commutation relation $[N^3,P^0]=iP^3$, $Q^{\mbox{PR}}_3$ can be written as
\be Q^{\mbox{PR}}_3=\frac{1}{\om}N^3-\frac{i\pi_3}{2\om},\ee
which, together with (\ref{eq:q3e}),  leads to $(1/\om)N^3\tilde{e}_\alpha=0,$ and thus $N^3\tilde{e}^\alpha=0.$ But this reasoning does not give us the clue as how the basis $e_\alpha$ can be obtained by taking the speed of light limit of the simple polarization basis (\ref{eq:em2}).\footnote{A brief discussion of an application of the Hawton-Baylis photon position operator to optical beams (derived via Wigner's little group method) can be found in Ref. \cite{hawton19}, c.f. also references therein.}

It is straightforward to verify that the operators $Q_i=Q^{\mbox{HB}}_i$ have the axial symmetry:
\be [M^3,Q_1]=iQ_2,\, [M^3,Q_2]=-iQ_1,\, [M^3,Q_3]=0.\label{eq:m3q}\ee
\subsubsection{Photon states localized on circles}
Since the three components of $Q^{\mbox{HB}}$ commute, they can be simultaneously diagonalized, and the simultaneous eigenvalue equation $Q^{\mbox{HB}}_i \f=q_i \f$ has three independent  solutions
\be \f_{\alpha,{\bf q}}(\bp)=\exp(-i{\bf q}\cdot\bp)e_\alpha(\bp),\, ({\mathbf q}\in\BR^3, \alpha=1,2,3).\ee
The states $\f_{1,{\bf q}}\pm i\f_{2,{\bf q}}$ describe photons localized at ${\bf q}\in\BR^3,$  of helicity $\pm 1,$ and $\f_{3,{\bf q}}$ is of helicity $0.$
However, since the states $e_\alpha(\bp)$ have a rather complicated $\bp$-dependence, and owing to the axial symmetry of $Q^{\mbox{HB}},$ it is natural to look for {\bf simple axially symmetric} simultaneous eigenvalue equations for $Q=Q^{\mbox{HB}}$:
\begin{eqnarray}
M^3\, \f&=&0,\label{eq:qc1}\\
Q_3\,\f&=&0,\label{eq:qc2}\\
(Q_1^2+Q_2^2)\,\f&=& R^2\,\f,\,R>0\label{eq:qc3}.
\end{eqnarray}
These would give us states localized on the circle $x^2+y^2=R^2,\, z=0,$ in the Cartesian coordinates $(x,y,z)$ in the photon's position space. To this end it is convenient to introduce cylindrical coordinates $(\rho,\phi,p_3),$ $\rho=\sqrt(p_1^2+p_2^2).$ One can then verify that states of the form
\be \f(\rho,\phi,p_z)=F(\rho)\begin{pmatrix}-\sin(\phi)\\ \cos(\phi)\\0\end{pmatrix}\label{eq:fj}\ee
are simple solutions of Eqs. (\ref{eq:qc1}) and (\ref{eq:qc2}), while Eq. (\ref{eq:qc3}) imposes a second order ordinary differential equation on $F(\rho)$:
\be \rho^2 F''(\rho)+\rho F'(\rho)+(R^2\rho^2-1)F(\rho)=0.\label{eq:Fr}\ee
with the following solution finite at the origin \cite{shi}:
\be F(\rho)=c J_1(R\rho),\ee
where $J_1$ is the Bessel function of the first kind, and $c$ is a constant - see Fig. \ref{fig:2}.
 \begin{figure}[!htb]
  \center
 \includegraphics[width=10.5cm, keepaspectratio=true]{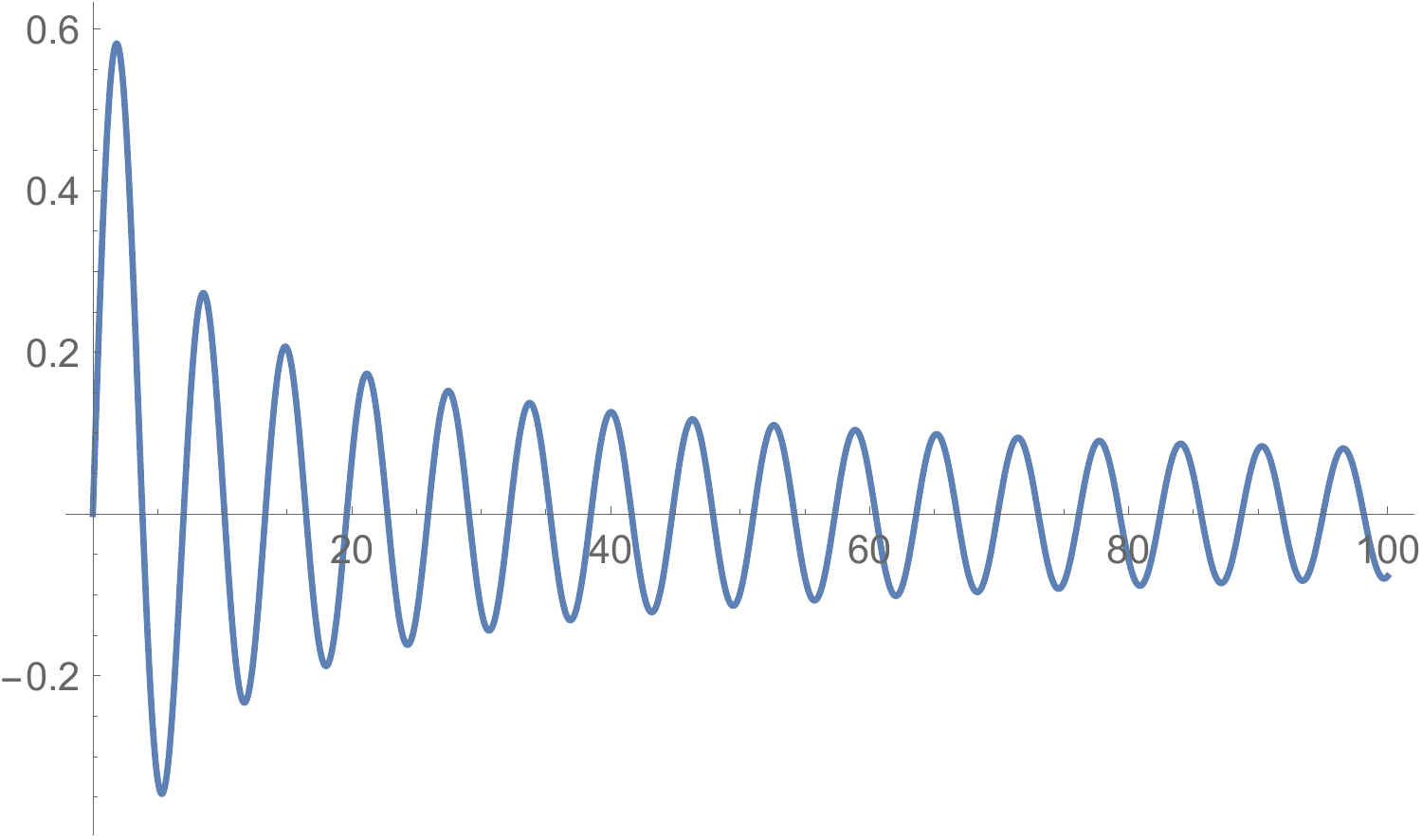}
 \caption{Bessel function $J_1(\rho).$}
\label{fig:2}\end{figure}
For reasons that will be clear from the next paragraph we choose for the constant $c$ the value:
\be c=- i R.\ee
\paragraph{Loop states}
Let $\ell$ be a closed loop in $\BR_\bx^3$ defined by a function $\Br(t),\quad 0\leq t\leq 2\pi.$  Following Ref. \cite{jj} let us define the state $\f_\ell$ by
 \be \f_\ell(\bp)=\frac{1}{2\pi}\int_0^{2\pi}e^{-i\bp\cdot\Br(t)}\,d\Br(t).\label{eq:o}
\ee
 Then $\f_l(\bp)$ is a superposition of simultaneous eigenstates of $X_i$ - see Eq. (\ref{eq:xi}) - ${\bf X}$-localized in $\BR_\bx^3$ at the points of the loop $\ell$. Therefore it is ${\bf X}$-localized on $\ell.$ The fact that the loop is closed, i.e. $\Br(2\pi)=\Br(0),$ implies that the state $\f_\ell(\bp)$ is an (improper) element of $\H_{ph},$ i.e that \be \bp\cdot \f_\ell(\bp)=0.\ee
  Taking the Fourier transform $\f_\ell(\bx)$ of $\f_\ell(\bp)$ we get
 \be \f_\ell(\bx)=\frac{1}{(2\pi)^3}\int_\BR^3 e^{i\bp\bx}\f_\ell(\bp)d^3p=\frac{1}{2\pi}\int_0^{2\pi}\delta\left(\bx-\Br(t)\right)d\Br(t),\label{eq:cc}\ee
 and it is clear $\f_\ell(\bx)$ has its support on the loop in the position coordinates space - it vanishes at all points $\bx$ outside the loop. It is also clear that for two non-intersecting loops $\ell,\ell'$ the states $\f_\ell$ and $\f_{\ell'}$ are orthogonal to each other.\footnote{Loops may form topologically inequivalent knots. In this respect the loop states discussed above are similar to knotted solutions of Maxwell equations discussed in Ref. \cite[Sec. 7]{bb1}.}

 As an example let us take for $\ell$ the circle in $(x,y)$ plane of radius $R$ given by the parametric equations
\be x=R\cos t,\,y=R\sin t,\,z=0.\ee
A straightforward calculation in cylindrical coordinates $(\rho, \phi,p_4)$ (in the momentum space) leads then to
\be \f_{x^2+y^2=R^2}(\bp)=-i\frac{R}{\rho}\, J_1(R\rho )\begin{pmatrix}-p_2\\ p_1\\0\end{pmatrix}.\ee
where $J_1$ is the Bessel function and $\rho=\sqrt{(p^1)^2+(p^2)^2},$ which coincides with the state $\f$ given by the formula (\ref{eq:fj}).
\paragraph{Amrein's washer photon states}
We use Eq. (\ref{eq:cc}) for a circle of radius $R$ at $z=z_0$, thus
\be \begin{split}
x(t)&=R\cos t,\\
y(t)&=R \sin t,\\
z&=z_0.
\end{split}
\ee
Thus \be \f_{R,z_0}=\frac{1}{2\pi}\int_0^{2\pi}\delta(x-R \cos t ,y-R\sin t,z-z_0)\begin{pmatrix}-\sin t\\ \cos t\\0\end{pmatrix} dt.\ee
We will use cylindrical coordinates $(r,\phi,z)$ in the position space. In these coordinates
\be \delta(x-x',y-y',z-z')=\frac{1}{r}\delta(r-r')\delta(\phi-\phi')\delta(z-z'),\ee
while \be d^3x=rdr\,d\phi\,dz.\ee
Thus
\be
\f_{R,z_0}(r,\phi,z)=\frac{1}{2\pi}\int_0^{2\pi}\frac{\delta(r-R)}{r}\delta(z-z_0)\delta(\phi-t)\begin{pmatrix}-\sin t\\ \cos t\\0\end{pmatrix}dt.\ee
After integrating over $dt$ we get
\be \f_{R,z_0}(r,\phi,z)=\frac{1}{2\pi}\frac{\delta(r-R)}{r}\delta(z-z_0)\begin{pmatrix}-\sin \phi\\ \cos \phi\\0\end{pmatrix}.\label{eq:ws}\ee
We now take a continuous superposition of these states for $R$ varying between $R_1$ and $R_2>R_1$ end $z_0$ varying between $z_1$  and $z_2>z_1$. As a superposition of photon states, it will be still a photon state. Let $\chi_{R_1,R_2}(r)$ be the function equal $1$ for $R_1\leq r\leq R_2$ and zero otherwise, and, similarly, let $\chi_{z_1,z_2}$ be the function equal $1$ for $z_1\leq z\leq z_2$ and zero otherwise. Taking the integral
\be \int_{-\infty}^\infty dz\int_0^{\infty}\chi_{R_1,R_2}(R)\chi_{z_1,z_2}(z)\f_{R,z_0}(r,\phi,z)\,dR\,dz\ee
we obtain
\be \f_{R_1,R_2,z_1,z_2}(r,\phi, z)=\frac{1}{2\pi r}\chi_{R_1,R_2}(r)\chi_{z_1,z_2}(z))\begin{pmatrix}-\sin \phi\\ \cos \phi\\0\end{pmatrix}.\ee
The function is evidently square integrable with respect to $d^3x=rdr\,d\phi\,dz$ and its probability density is zero everywhere except for the bolt washer-like region $R_1\leq r\leq R_2$, $z_1\leq z\leq z_2$ - see Fig. \ref{fig:3}.\\
 \begin{figure}[!htb]
  \center
 \includegraphics[width=0.3\textwidth, keepaspectratio=true]{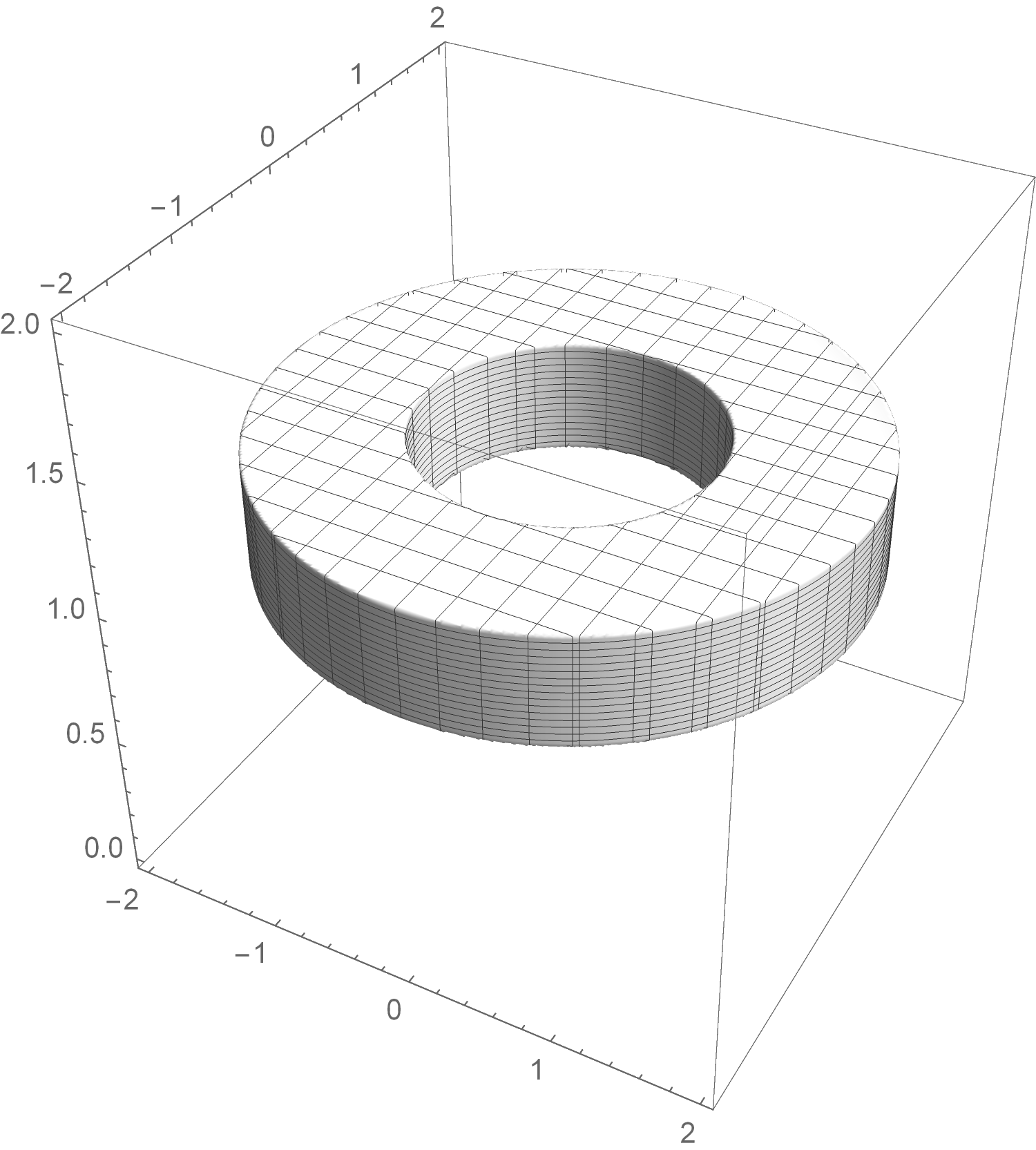}
 \caption{Amrein's washer state (\ref{eq:ws}) is strictly (weakly) localized in the region $1\, \mbox{nm}\leq R\leq \sqrt{2}\, \mbox{nm},\,1\,\mbox{nm}\leq z\leq 1.4\, \mbox{nm}.$}
\label{fig:3}\end{figure}
We call it Amrein's state, as the existence of such states was first proved in 1968 by A.O. Amrein \cite{amrein}. These states are also strictly localized with respect to the commuting position operators $Q_i^{\mbox{HB}}.$
\subsection{POV measure photon's localization}\label{sec:povm}
Every self-adjoint operator admits a spectral decomposition. Usually we write it as:
\be A=\int \lambda dE(\lambda).\ee
More generally, given a family of commuting observables, we have a spectral measure on the common spectrum of these observables. Here we have operators $X_i=i\partial/\partial p^i$ defined on  $\H,$ with commuting components $[X_i,X_j]=0,$ and we have a unique spectral measure on $\BR^3$ such that
\be X_i=\int_{\BR^3}x_i dE(\bx).\ee
Then for every Borel set $\Delta\subset \BR^3$ the operator
\be E(\Delta)=\int_\Delta dE(\bx)\ee
is a projection operator on the subspace of states localized in  $\Delta,$ for any two sets $\Delta$ i $\Delta'$  the operators $E(\Delta)$ i $E(\Delta'$ commute.

However the operators $X_i$ do not leave the subspace $\H_{ph}$ invariant, therefore we have introduced
$Q_i^{\mbox{PR}}$
\be Q_i^{\mbox{PR}}=P X_i P+(I-P) X_i (I-P),\label{eq:qi}\ee
where $(I-P)$ is the orthogonal projection operator on $\H_{ph}.$
Thus we have
\be Q_i^{\mbox{PR}}=\int_{\BR^3}x_i dF(\bx),\ee
where
\be F(\Delta)=P E(\Delta)P+(I-P) E(\Delta)(I-P).\label{eq:fd}\ee
Now $F(\Delta)$ are not any longer projection operators, and for different $\Delta$ they do not commute. Nevertheless they are non-negative operators  $0\leq F(\Delta)\leq 1$ and
\be \int_{\BR^3} dF(\bx) = I,\ee
Therefore we have a  POV - positive operator valued measure. Jauch and Piron \cite{jp} called a photon state $\f$ weakly localized in $\Delta$ if $F(\Delta) \f=\f,$ and conjectured existence of such states. Amrein \cite{amrein} provided a rigorous proof of their existence and has shown how to construct them. Our formula (\ref{eq:ws}) provides a rich explicit family of such states.
Notice however that the circle states $\f_{x^2+y^2=R^2}(\bp)$ while satisfying
\be (Q^{\mbox{HB}}_1)^2+(Q^{\mbox{HB}}_2)^2)\,\f_{x^2+y^2=R^2}= R^2 \f_{x^2+y^2=R^2},\ee
they are not eigenstates of $(Q^{\mbox{PR}}_1)^2+(Q^{\mbox{PR}}_2)^2.$ Instead, for any real measurable function $\phi$ on $\BR^+$, they are eigenstates to the eigenvalue $\phi(R^2)$ of the operator $Q_{\phi(x^2+y^2)}$ defined as
\be\begin{split}  Q_{\phi(x^2+y^2)}&\doteq P\phi((X_1)^2+(X_2)^2)P+(I-P)\phi((X_1)^2+(X_2)^2)(I-P)\\&=\int \phi(x^2+y^2)\, dF(\bx),\end{split}\ee
where $dF(\bx)$ is the POV measure defined by Eq. (\ref{eq:fd}).\footnote{For instance in Ref \cite{koczan} the square root function is proposed.}

The circle localized states $\f_{x^2+y^2=R^2}$ are superpositions of helicity $+1$ and helicity $-1$ states, in agreement with Theorem 2 of Ref. \cite{amrein}. Their projections on definite helicity subspaces, $+1$ or $-1$, are still circle-localized with respect to the Hawton-Baylis operator, but they are not weakly localized in the sense of the POV measure $F(\Delta)$ - in agreement with Theorem 1 of Ref. \cite{amrein}
\section{Conclusions}
In conclusion, using differential geometric structures on the mass hyperboloid and on the light cone in momentum space we derived the explicit form of the unitary representation of the Poincar\'e group for helicity zero and helicity $\pm 1$ massless particles. Using this explicit form for the boost in the direction of the third axis we have found that simple photon polarization states based on Hertz-type potentials survive the light speed limit and generate a polarization basis used in the construction of photon position operators with commuting components. We have compared these operators, as well as the underlying affine connections, to the classical Pryce operator and connection, and have found that the Pryce connection, Eq. (\ref{eq:pdp}), non-flat, but with rotational symmetry, is metric semi-symmetric, while the Hawton-Baylis flat, but only axially symmetric, connection, Eq. (\ref{eq:gg}), does not have this property. We have constructed finite-norm photon states localized in bolt washer-like regions and proved that they are strictly localized in these regions with respect to Hawton-Baylis position operators and also with respect to the Jauch--Piron--Amrein POV measures. They are also $z$-localized (but not radius-localized) with respect to the Pryce photon position operator.
\section*{Acknowledgements}
Thanks are due to M. Schlichtinger and J. Szulga for the discussion of the preliminary version of the part of Sec. \ref{boosts}, as well as to G. Koczan, P. Lescaudron and R. Coquereaux for their interest and for a stimulating discussion on related subjects. I wish to thank my wife, Laura, for reading the manuscript and for her constant support.\\
\noindent This work was supported by Quantum Future Group Inc.

\end{document}